\title{Privacy Against Agnostic Inference Attacks in Vertical Federated Learning}
\author{Morteza Varasteh}
\email{m.varasteh@essex.ac.uk}
\affiliation{
  \institution{School of Computer Science and Electronic Engineering, University of Essex}
  \city{Colchester CO4 3SQ}
  \country{UK}
}
\newtheorem{proposition}{Proposition}
\newtheorem{remark}{Remark}
\newtheorem{example}{Example}
\begin{document}

\begin{abstract}
A novel form of inference attack in vertical federated learning (VFL) is proposed, where two parties collaborate in training a machine learning (ML) model. Logistic regression is considered for the VFL model. One party, referred to as the active party, possesses the ground truth labels of the samples in the training phase, while the other, referred to as the passive party, only shares a separate set of features corresponding to these samples.
It is shown that the active party can carry out inference attacks on both training and prediction phase samples by acquiring an ML model independently trained on the training samples available to them. 
This type of inference attack does not require the active party to be aware of the score of a specific sample, hence it is referred to as an agnostic inference attack. 
It is shown that utilizing the observed confidence scores during the prediction phase, before the time of the attack, can improve the performance of the active party's autonomous ML model, and thus improve the quality of the agnostic inference attack.
As a countermeasure, privacy-preserving schemes (PPSs) are proposed. 
While the proposed schemes preserve the utility of the VFL model, they systematically distort the VFL parameters corresponding to the passive party's features. 
The level of the distortion imposed on the passive party's parameters is adjustable, giving rise to a trade-off between privacy of the passive party and interpretability of the VFL outcomes by the active party. 
The distortion level of the passive party's parameters could be chosen carefully according to the privacy and interpretability concerns of the passive and active parties, respectively, with the hope of keeping both parties (partially) satisfied. 
Finally, experimental results demonstrate the effectiveness of the proposed attack and the PPSs. 
\end{abstract}

\maketitle

\keywords{Machine Learning, Privacy, Federated Learning}

\section{Introduction}
The emergence of distributed machine learning (ML) techniques has been a game-changer in the way valuable information is obtained from raw data in various fields, such as computer vision, image recognition, financial services, and natural language processing. With the growing need for the utilization of distributed data to construct more accurate and advanced ML models, there has been a surge in demand in multiple sectors. To address the limitations of centralized ML models, including issues with data storage, excessive computations, security and privacy breaches, various distributed ML techniques have been proposed \cite{Wei_Fed_DP}. One such technique is federated learning (FL), which was introduced in \cite{McMahan_2017}. In FL, two or more data-centers, known as parties, work together to train a shared ML model, thereby addressing the concerns of traditional centralized learning. This approach has gained widespread attention for its applications in real-life scenarios, such as in health systems \cite{Songtao_health, Wenqi_health}, keyboard prediction \cite{Francoise_Keyboard, Andrew_keyboard_prediction}, and e-commerce \cite{Kai_2019, Wang_2020}.

Three types of Federated Learning (FL) are commonly studied in the literature, including Horizontal FL (HFL), Vertical FL (VFL), and Transfer FL. The distinction between these approaches lies in the manner in which data are partitioned and shared among parties. In HFL, each party holds a unique set of samples, but all parties share the same features \cite{Rong}. On the other hand, VFL involves parties sharing the same samples, but each party trains and predicts collaboratively using a separate set of features \cite{Cheng_2020}.
As shown in Figure \ref{fig122}, an illustration of VFL is presented, where a bank and a FinTech company collaborate in training a classifier that will be used by the bank to approve or reject credit card applications. The bank, as the active party, holds the ground truth labels for the samples in the training set. Meanwhile, the FinTech company, as the passive party, contributes to the VFL process by sharing a distinct set of features pertaining to the samples.

\begin{figure}[t]
 \centering 
 \scalebox{1} 
 {\includegraphics{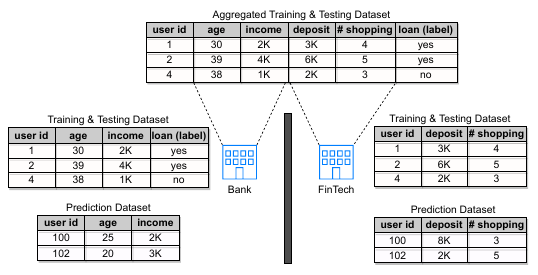}} 
 \caption{Digital banking as an example of vertical federated learning \cite{Xinjian}.}
 \label{fig122} 
\end{figure}

After the VFL model is trained, both the bank and the FinTech company use it collaboratively to make predictions for new samples. To do this, each party inputs their corresponding features for each sample into the model. Typically, a third party known as the coordinator authority (CA) is involved in this collaboration during both the training and prediction phases. The CA's role during the prediction phase is to obtain the model's outcome for new samples and share it with the active party when requested. The model output can be in the form of a confidence score, which is a probability vector representing the likelihood of each class, or a one-hot vector representing the output class label. Due to privacy concerns, the confidence score of a sample may be distorted before it is delivered to the active party, typically by providing a noisy or rounded version of the score. In the literature, generally two scenarios are considered for this type of collaboration, namely white-box and black box settings, where the active party is aware and unaware of the passive party's parameters, respectively.

Focusing on the prediction phase, in \cite{Xinjian}, it has been demonstrated that when the exact form of the confidence scores is shared with the active party, there is a risk of the adversary (i.e. the active party) reconstructing the features of the passive party, which are considered private information. This can result in a privacy leak for the users, and therefore known as inference attack. The reconstruction is accurate when the number of unknown features, or the passive party features, is less than the number of classes. In \cite{RaVaGu} various approaches regarding improved inference attacks have been proposed that are based on solving an under-determined system of linear equations. In particular, it is shown that using methods that approximate the best worst case solution among feasible solutions can significantly improve the attack accuracy. Furthermore, as a defense mechanism, it has been shown in \cite{RaVaGu} that by transforming the passive features\footnote{For brevity, throughout the paper we use passive (active) features and passive (active) parameters instead of passive (active) party's features and passive (active) party's parameters, respectively.} in a linear fashion and keeping the transformation mapping hidden from the active party, it is possible to preserve the model's accuracy while reducing the privacy leakage (i.e., higher mean squared error in reconstructing the passive features).

While previous works have studied inference attacks in the prediction phase using samples whose confidence scores (or a different version of them) are available at the active party, in this paper, under the white-box setting, we focus on the privacy leakage of samples whose confidence scores are not yet available at the active party. 
This can include samples in the prediction phase for which no score has been requested by the active party, as well as samples used in the training phase to train the VFL model. Since the adversary is unaware of the confidence score of the attack target, we refer to this type of attack as an \textit{agnostic inference attack}. Focusing on this type of privacy leakage, the main contributions of the paper are as follows. Focusing on this type of privacy leakage, the main contributions of the paper are as follows.

\begin{itemize}

    \item Building on the linear-algebraic feature reconstruction framework originally introduced in \cite{RaVaGu}, the active party uses its available resources, i.e., the training data with only the active features and labels, to reconstruct an independent classifier similar to the VFL model. This classifier is named the adversary model (AM). This is possible because the active party already has access to the ground truth labels of samples in the training set. 
    Unlike prior inference attacks that assume access to exact or perturbed confidence scores of the attack targets, the AM enables inference in an \emph{agnostic} setting where such scores are unavailable.
    
    The AM can be used for two purposes. First, as a means to estimate any sample's confidence score and conduct inference attacks on the passive features using the reconstruction methods introduced in \cite{RaVaGu}. The main aspect of this type of attack is that the adversary is agnostic to the confidence score or any other form of it corresponding to the attack target,  which puts not only prediction samples but also \emph{all training samples} at risk of privacy leakage.
    Second, by comparing the sample-by-sample scores obtained from the AM and the VFL model during the prediction phase, the active party can use the amount of score mismatch as a measure of the improvement in classification due to the VFL and decide whether to continue or stop requesting further collaborative predictions. This is particularly useful when the active party incurs a cost for each VFL model prediction made through the CA.

    \item Although the AM is primarily trained on the training samples, its ability to conduct an attack can be enhanced by taking into account the prediction samples and their associated confidence scores that have already been communicated to the active party. We consider scenarios where the confidence scores are delivered in their exact form to the active party\footnote{\label{fn:softscores}
We assume that confidence scores are delivered to the active party in their exact form (soft version).
This represents a worst-case scenario from a privacy perspective, as soft scores generally reveal more information than hard labels or coarsened outputs. 
At the same time, this assumption is practically motivated as in many real-world deployments, the active party requires calibrated confidence scores for downstream tasks such as ranking, risk assessment, threshold selection, or soft clustering (e.g., prioritizing decisions under budget or resource constraints).}. It is observed that incorporating these scores leads to significant improvements in the effectiveness of the attack. This refinement mechanism, referred to as the refined adversary model (RAM), aligns the AM’s outputs more closely with those of the jointly trained VFL model using only a small number of observed prediction scores. The RAM does not introduce a new reconstruction technique, but rather improves the quality of score approximation within the same underlying linear system used in prior inference attacks.
    
    This realization highlights the limitations of privacy-preserving schemes (PPSs) that operate solely at the score level, and motivates a shift in focus toward the model parameters themselves rather than the confidence scores or passive features, departing from strictly black-box or white-box views in existing VFL settings.

    \item We propose PPSs as a defense mechanism against agnostic inference attacks. The proposed schemes are designed to maintain the utility of the VFL model, i.e., delivering confidence scores in their exact form, while at the same time preventing adversaries from misusing the VFL collaboration. Rather than altering the confidence scores, the PPSs operate by systematically distorting the parameters corresponding to the passive party’s features, while keeping the underlying prediction model intact.
    
    The level of distortion applied to the passive parameters is adjustable and determines the degree of privacy leakage of the passive party, as measured by the MSE of reconstructed features. A higher level of distortion results in lower privacy compromise. This introduces an explicit privacy--interpretability trade-off, where interpretability at the active party is quantified via the deviation between the released and original passive parameters.
    
    To address this trade-off, we formulate PPS design as constrained optimization problems, including problems over Stiefel manifolds, that enable controlled parameter distortion while preserving model utility. These schemes generalize prior feature-transformation-based defenses by operating directly on model parameters and by explicitly accounting for agnostic inference attacks.
\end{itemize}

The paper is structured as follows. In Section \ref{SysMod}, an explanation of the system model under study is provided.
In Section \ref{Datasets}, the characteristics of the datasets utilized in this paper, as well as the model used for evaluating the results, are reviewed.
Section \ref{AgnIA} is dedicated to the introduction of agnostic inference attack and is divided into three sub-sections. In Section \ref{AgnIA_Prel}, the necessary background information required to perform an inference attack is discussed. Subsequently, in Section \ref{AgnIA_AM}, the procedure for obtaining and refining the AM is detailed, followed by a performance analysis of the agnostic inference attack in Section \ref{AgnIA_tech}.
The proposed PPSs against agnostic inference attack are described in detail in Section \ref{PPS}, which is further divided into two sub-sections.
In Section \ref{Privacy_prel}, the geometry of optimization algorithms with orthogonality constraints, specifically optimization over Stiefel manifolds, is reviewed to provide a better understanding of one of the proposed PPSs.
Additionally, an overview of the concept of interpretability in the context of the present work is also provided. In Section \ref{Privacy_algo}, the technical analysis of the proposed PPSs is presented.
In Section \ref{NR}, the experimental results are reported and discussed, along with a comprehensive evaluation of the performance of agnostic inference attack on real-world datasets under various scenarios and the performance of the proposed PPSs.
The paper concludes in Section \ref{conc}.

\textbf{Notations.} 
Matrices and vectors\footnote{All the vectors considered in this paper are column vectors.} are denoted by bold capital (e.g., $\mathbf{A,Q}$) and bold lower case letters (e.g., $\mathbf{b,z}$), respectively. 
Sets are denoted by capital letters in calligraphic font (e.g., $\mathcal{X},\mathcal{G}$) with the exception of the set of real numbers, which is denoted by $\mathbb{R}$.  The cardinality of a finite set $\mathcal{X}$ is denoted by $|\mathcal{X}|$. 
The transpose of $\mathbf{A}_{m\times k}$ is denoted by $\mathbf{A}^T$, and when $m=k$, its trace and determinant are denoted by $\textnormal{Tr}(\mathbf{A})$ and $\textnormal{det}(\mathbf{A})$, respectively. For an integer $n\geq 1$, the terms $\mathbf{I}_n$, $\mathbf{1}_n$, and $\mathbf{0}_n$ denote the $n$-by-$n$ identity matrix, the $n$-dimensional all-one, and all-zero column vectors, respectively, and whenever it is 
clear from the context, their subscripts are dropped. 
For integers $m\leq n$, we have the discrete interval $[m:n]\triangleq\{m, m+1,\ldots,n\}$, and the set $[1:n]$ is written in short as $[n]$. 
For $\mathbf{x}\in\mathbb{R}^n$ and $p\in[1,\infty]$, the $L^p$-norm is defined as $\|\mathbf{x}\|_p\triangleq(\sum_{i=1}^n|x_i|^p)^{\frac{1}{p}},p\in[1,\infty)$, and $\|\mathbf{x}\|_\infty\triangleq\max_{i\in[n]}|x_i|$.  
Throughout the paper, $\|\cdot\|$ (i.e., without subscript) refers to the $L^2$-norm and spectral norm for vectors and matrices, respectively. 
For a matrix $\mathbf{W}= [\mathbf{w}_1, \mathbf{w}_2, \ldots, \mathbf{w}_n]$, we have $\textnormal{Vec}(\mathbf{W}) = [\mathbf{w}_1^T, \mathbf{w}_2^T, \ldots, \mathbf{w}_n^T]^T$. The symbol $\otimes$ stands for the Kronecker product.

\section{System model}\label{SysMod}
\subsection{Machine learning (ML)}\label{SysMod_ML}
An ML model is a function $f_{\pmb{\theta}}:\mathcal{X}\to\mathcal{U}$ parameterized by the vector $\pmb{\theta}$, where $\mathcal{X}$ and $\mathcal{U}$ denote the input and output spaces, respectively. In this paper, we consider the supervised classification setting, where a labeled training dataset is used to train the model.

Assume that a training dataset $\mathcal{D}_{\textnormal{train}}\triangleq\{(\mathbf{x}_i,u_i)|i\in[n]\}$ is given, where each $\mathbf{x}_i$ is a $d_t$-dimensional example/sample and $u_i$ denotes its corresponding label. Learning refers to the process of obtaining the parameter vector $\pmb{\theta}$ in the minimization of a loss function, i.e.,
\begin{equation}
    \min_{\mathbf{\pmb{\theta}}}\frac{1}{n}\sum_{i=1}^n l(f_\mathbf{\pmb{\theta}}(\mathbf{x}_i),u_i)+\omega(\mathbf{\pmb{\theta}}),
\end{equation}
where $l(\cdot,\cdot)$ measures the prediction loss of  $f_{\pmb{\theta}}(\mathbf{x}_i)$, while the true label is $u_i, i\in[n]$. To avoid overfitting, a regularization term $\omega(\pmb{\theta})$ can be added to the optimization.

Once the model is trained, i.e., $\pmb{\theta}$ is obtained, it can be used for the prediction of any new sample. In practice, the prediction is a (probability) vector-valued, i.e., it is a vector of confidence scores as $\mathbf{c}=(c_1,c_2,\ldots,c_k)^T$ with $\sum_jc_j=1,c_j\geq 0,j\in[k]$, where $c_j$ denotes the probability that the sample belongs to class $j$, and $k$ denotes the number of classes. Classification can be done by choosing the class that has the highest probability.

\subsection{Vertical Federated Learning}\label{SysMod_VFL}
VFL is a type of distributed ML model training approach in which two or more parties are involved in the training process, such that they hold the same set of samples with disjoint sets of features. The main goal in VFL is to train a model in a privacy-preserving manner, i.e., to collaboratively train a model without each party having access to other parties' features. Typically, the training involves a trusted third party known as the CA, and it is commonly assumed that only one party has access to the label information in the training and testing datasets. This party is named \textit{active} and the remaining parties are called \textit{
passive}. Throughout this paper, we assume that only two parties are involved; one is active and the other is passive. The active party is assumed to be \textit{honest but curious}, i.e., it obeys the protocols exactly, but may try to infer passive features based on the information received. As a result, the active party is exchangeably referred to as the \textit{adversary} in this paper.

In the existing VFL frameworks, CA's main task is to coordinate the learning process once it has been initiated by the active party. 
During the training, CA receives the intermediate model updates from each party, and after a set of computations, backpropagates each party's gradient updates, separately and securely. 
To meet the privacy requirements of parties' datasets, cryptographic techniques such as secure multi-party computation (SMC) \cite{Andrew_SMC} or homomorphic encryption (HE) \cite{HE_IVAN} are used. 

Once the global model is trained, upon the request of the active party for a new record prediction, each party computes the results of their model using their own features. 
CA aggregates these results from all the parties, obtains the prediction vector (confidence scores), and delivers that to the active party for further action. 
Throughout this paper, we assume that the active party is not allowed to make any query about the scores of the samples in the training set. 
This is justified due to the fact that the active party already has the ground truth labels of the training set. 
Note that there could be some edge cases where confidence scores of all the samples (both in the training and prediction phases) are required, such as when the active party aims at forming a soft clustering among all the samples via Gaussian mixture model to determine a high-confidence region within each class\footnote{For instance, in the Bank-FinTech example, the bank may wish to grant credit approval for clients with low entropy confidence scores due to budget limitations.}. 
Where this is the case, i.e., exact or distorted versions of attack targets' scores are available at the active party, inference attack methods studied in \cite{RaVaGu} could be applied to reconstruct the passive features. 

As in \cite{Xinjian, RaVaGu}, we assume that the active party has no information whatsoever about the underlying distribution of passive features. However, it is assumed that the knowledge about the names, types and ranges of passive features is available to the active party to decide whether to participate in a VFL or not.

\begin{table}[t]
\caption{Details of the datasets} 
\centering 
\begin{tabular}{c c c c c} 
\hline\hline 
Dataset & \#Features & \#Classes & \#Records \\ [0.5ex] 
\hline 
Bank & 19 & 2 & 41188 \\ 
Adult & 13 & 2 & 48842 \\ 
Satellite & 36 & 6 & 6430\\
PenDigits & 16 & 10 & 10992 \\
Grid & 13 & 2 & 10000 \\[0.5ex] 
\hline 
\end{tabular}
\label{table_dataset} 
\end{table}

\section{Datasets and model}\label{Datasets}
\textbf{Datasets.} In this paper, we perform a comprehensive evaluation of our proposed methods on five real-world and publicly available datasets, namely Bank, Adult salaries, Pen-based handwriting digits, Electrical grid stability and Satellite. These datasets have been widely used in the literature and have been obtained from the Machine Learning Repository website \cite{MLR}. They have been selected to cover a wide range of classification tasks, including both binary and multi-class classification. The details of each of these datasets are provided in Table \ref{table_dataset}. Furthermore, to gain further insights into the characteristics of these datasets, we provide a visual representation of some of their empirical statistics in Figure \ref{fig0}.

In the Bank dataset, there are a total of 20 features. However, as stated in the dataset's description file on the Machine Learning Repository website \cite{MLR}, the 11th feature has a significant impact on the output target and should be disregarded in order to obtain a more accurate predictive model. As a result, the training process in our study is based on 19 features, as detailed in Table \ref{table_dataset}. Additionally, this dataset has 10 categorical features which need to be handled properly by the predictive model. Common ways to do this include one-hot encoding of the categorical features, mapping ordinal values to each category, mapping categorical values to their statistics, and so on. In this paper, we have chosen to map each categorical feature to its average within each class. This approach has been found to be effective for models such as Logistic Regression (LR) and Neural Networks (NN).

\begin{figure*}[ht]
 \centering 
 \scalebox{0.25} 
 {\includegraphics{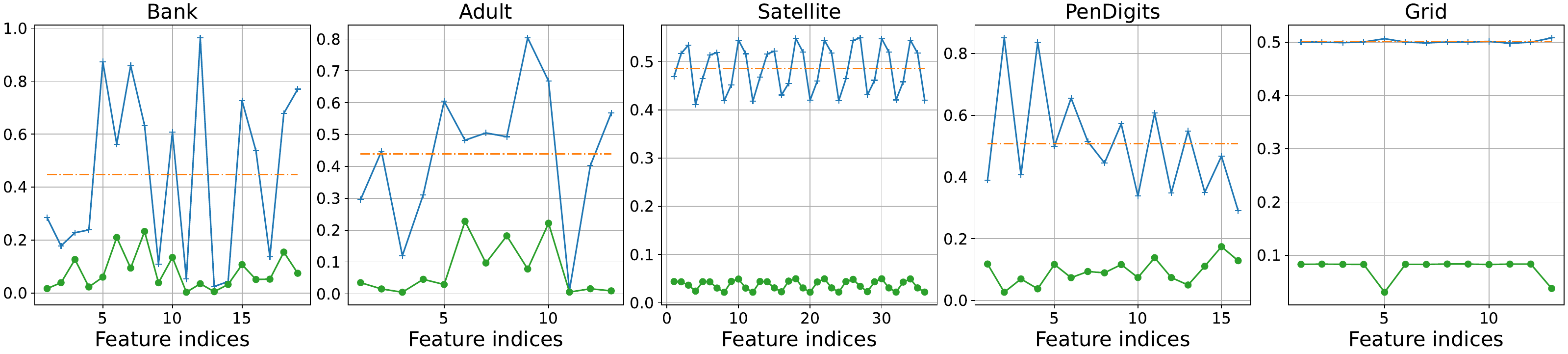}} 
 \caption{Empirical mean, variance, and mean of the means of features.}
 \label{fig0} 
\end{figure*}

\textbf{Model.} In this paper, we focus on LR, which can be modelled as 
\begin{align}\label{confi}
    \mathbf{c} = \sigma(\mathbf{Wx} + \mathbf{b}),
\end{align}
where $\mathbf{W}$ and $\mathbf{b}$ are the parameters collectively denoted as $\pmb{\theta}$, and $\sigma(\cdot)$ is the softmax function. 
Each party holds their parameters corresponding to their local features. The VFL model is trained in a centralized manner, which is a reasonable assumption according to \cite{Xinjian}. It is assumed that no intermediate information is revealed during the training phase, and only the final model is disclosed.

As in \cite{Xinjian}, the feature values in each dataset are normalized into $[0,1]$. It is important to note that normalizing the entire dataset, including both the training and test data, may lead to an overly optimistic model accuracy, a phenomenon known as \textit{data snooping} \cite{MacKinlay_Data_snooping}. This effect occurs when working with very noisy datasets and should be avoided in such cases. However, in our experiments, we have not considered this issue as the datasets under consideration do not seem excessively noisy.

Each dataset has been divided into 80\% training data and 20\% prediction data. The training set which is the one used for VFL model training is again split into 80\% training and 20\% test data. Data splitting is done using train$\_$test$\_$split in the scikit-learn package. In training LR, we apply early stopping, and the training is done without considering any regularization. ADAM optimization is used for training, and the codes, which are in PyTorch, are available online in our GitHub repository \cite{mrtzvrst}. 

\textbf{Baselines.}
Depending on the number of unknown features compared to the number of classes in a dataset, two different scenarios are considered. To approach these cases we use two different inference attack (estimation) methods, namely the least squares and half$^*$ methods. The latter is proposed in \cite{RaVaGu} which is also briefly explained in section \ref{AgnIA_Prel}.

\section{Agnostic Inference Attack}\label{AgnIA}

\subsection{Preliminaries}\label{AgnIA_Prel}
Let $(\mathbf{Y}^T,\mathbf{X}^T)^T$ denote random $d_t$-dimensional input samples, where the $(d_t-d)$-dimensional $\mathbf{Y}$, and the $d$-dimensional $\mathbf{X}$, correspond to the feature values held by the active and passive parties, respectively. 
We denote the samples in the training and prediction phase by the subscripts \textit{t} and \textit{p}, respectively, e.g., $\mathbf{X}_{p}$ represents the features of a sample in the prediction phase held by the passive party\footnote{When no distinction is required, we use the generic forms $\mathbf{Y}, \mathbf{X}$.}.

The VFL model under consideration is LR, where the confidence score is given by $\mathbf{c}=\sigma(\mathbf{z})$ with $\mathbf{z}=\bold{W}_{\textnormal{act}} \mathbf{Y}+\bold{W}_{\textnormal{pas}} \mathbf{X}+\mathbf{b}$. Denoting the number of classes in the classification task by $k$, $\bold{W}_{\textnormal{act}}$ (with dimension $k\times (d_t-d)$) and $\bold{W}_{\textnormal{pas}}$ (with dimension $k\times d$) are the model parameters of the active and passive parties, respectively, and $\mathbf{b}$ is the $k$-dimensional bias vector. From the definition of $\sigma(\cdot)$, we have 
\begin{align}\label{qe1}
    \ln \frac{c_{m+1}}{c_m} = z_{m+1}-z_m,\ m\in[k-1],
\end{align}
where $c_m,z_m$ denote the $m$-th component of $\mathbf{c},\mathbf{z}$, respectively. Define $\mathbf{J}$ as
\begin{equation}\label{JJ}
    \mathbf{J}\triangleq \begin{bmatrix}
    -1 & 1 & 0 & 0 & \ldots & 0\\
    0 & -1 & 1 & 0 & \ldots & 0\\
    0 & 0 & -1 & 1 & \ldots & 0\\
    \vdots & \vdots & \vdots & \vdots & \ddots &\vdots \\
    0 & \ldots & \ldots & 0 & -1 & 1
    \end{bmatrix}_{(k-1)\times k},
\end{equation}
whose rows are cyclic permutations of the first row with offset equal to the row index$-1$. By multiplying both sides of $\mathbf{z}=\bold{W}_{\textnormal{act}} \bold{Y}+\bold{W}_{\textnormal{pas}} \bold{X}+\bold{b}$ with $\mathbf{J}$, and using (\ref{qe1}), we get
\begin{align}
    \bold{JW}_{\textnormal{pas}}\bold{X} &= \mathbf{Jz}-\bold{J}\bold{W}_{\textnormal{act}}\mathbf{Y}-\mathbf{Jb}\label{eqeq1}\\
    &=\mathbf{c}'-\bold{J}\bold{W}_{\textnormal{act}}\mathbf{Y}-\mathbf{Jb},\label{eq:1}
    \end{align}
where $\mathbf{c}'$ is a $(k-1)$-dimensional vector whose $m$-th component is $\ln \frac{c_{m+1}}{c_m}$. Denoting the RHS of (\ref{eq:1}) by $\bold{b}'$, (\ref{eq:1}) writes in short as $\mathbf{Ax}=\mathbf{b}'$, where $\mathbf{A}\triangleq \bold{JW}_{\textnormal{pas}}$.

The white-box setting refers to the scenario where the adversary is aware of $(\mathbf{W}_{\textnormal{act}},\mathbf{W}_{\textnormal{pas}},\mathbf{b})$ and the black-box setting refers to the context in which the adversary is only aware of $\mathbf{W}_{\textnormal{act}}$.
In this paper, we are interested in a scenario where the active party wishes to reconstruct passive features of samples whose confidence scores $\mathbf{c}$ have not been observed yet  (which encompasses all the samples in the training phase as well). 
One measure by which the attack performance can be evaluated is the \textit{mean squared error} per feature, i.e.,
\begin{equation}\label{MSE}
    \textnormal{MSE}=\frac{1}{d}\mathds{E}\left[\|\mathbf{X}-\hat{\mathbf{X}}\|^2\right],
\end{equation}
where $\hat{\mathbf{X}}$ is the adversary's estimate. Let $N$ denote the number of samples whose passive features have been estimated. Assuming that these $N$ samples are produced in an i.i.d. manner, \textit{Law of Large Numbers} (LLN) allows to approximate MSE by its empirical value $\frac{1}{Nd}\sum_{i=1}^{N}\|\mathbf{X}_i-\hat{\mathbf{X}}_i\|^2$, since the latter converges almost surely to (\ref{MSE}) as $N$ grows.

\begin{remark}\label{rem:MMSE}
(On the use of MSE as a privacy metric): The MSE in (\ref{MSE}) is used as a measure of privacy leakage through feature reconstruction.
Although MSE is not a formal privacy definition, it admits a clear operational interpretation in the present setting: smaller MSE corresponds to more accurate reconstruction of the passive party’s features and hence higher inference risk. This choice can be theoretically motivated through information-theoretic connections.
In particular, the mutual information--minimum mean square error (I--MMSE) relationship~\cite{verdu} establishes that estimation distortion and information leakage are fundamentally linked.
Moreover, it is shown in \cite{Cuff} that differential privacy can be interpreted as a constraint on mutual information between private data and released outputs. 
Taken together, these results suggest that controlling reconstruction MSE implicitly limits information leakage, providing a meaningful proxy for privacy under the considered adversarial model\footnote{We emphasize that MSE does not provide worst-case or distribution-independent guarantees and does not capture all aspects of privacy.
Nevertheless, it is appropriate here because it aligns with prior work on inference attacks in VFL \cite{RaVaGu,Xinjian}, admits analytical characterization, and enables explicit optimization of the privacy--interpretability trade-off.}.
\end{remark}

\subsection{Adversary Model (AM)}\label{AgnIA_AM}


Recall that we study a scenario where prediction scores (exact forms or other) of attack targets are not available at the active party. 
When evaluating the potential inference attacks within this context, it is important to consider the passive party's vulnerabilities due to not just the white-box setting, but also the availability of data to the active party. 
This aspect of the scenario (the active party's data) can play a significant role in the success or failure of an attack. 
The active party's access to data and the quality of that data can greatly impact their ability to carry out an effective inference attack.
Despite the limitations imposed by the lack of access to prediction scores, the active party can still utilize the data and resources available to them to conduct an inference attack on passive features during both the training and prediction phases. To achieve this, the active party can build an independent AM using the training data at their disposal.
This allows the active party to estimate the confidence scores of their target, empowering them to carry out a successful attack.
To build an AM, the active party trains a classifier, denoted by $f_{\pmb{\theta}_a}:\mathcal{Y} \to \mathcal{U}$, using only the active features, independently of the passive party and CA. This is feasible as the active party already has the ground truth labels for their training set. The parameters of the AM are obtained through a training process applied over the active party's dataset, via the following minimization 
\begin{equation}\label{loss0}
    \min_{\pmb{\theta}_a}\frac{1}{n_t}\sum_{i=1}^{n_t} H(f_{\pmb{\theta}_a}(\mathbf{y}_{t, i}), u_i)+\omega(\pmb{\theta}_a),
\end{equation}
where $H(\cdot, \cdot)$ is the cross entropy loss function, $n_t$ is the number of training samples, and 
$\mathbf{y}_{t,i}$ is the active features of the $i$-th sample in the training set. $\pmb{\theta}_a$ denotes the parameters of AM. 

Training an independent classifier by the active party may appear to be a useful method to evaluate the improvement in performance due to collaboration with the passive party when building the VFL model. 
However, this approach can also be used for malicious purposes.
For example, the active party can use the function $f_{\pmb{\theta}_a}$ to estimate confidence scores, denoted by $\hat{\mathbf{c}}$, without even making a prediction query from CA.
These estimates can then be used to conduct inference attacks on passive features of any training or unseen sample.
This can be achieved by forming a new set of equations using the steps outlined in (\ref{qe1}) to (\ref{eq:1}), with $\mathbf{c}$ being replaced with $\hat{\mathbf{c}}$.
As discussed in further detail in section \ref{AgnIA_tech}, with this new set of equations, the adversary can use known methods for inference attacks \cite{RaVaGu} to estimate the passive features. Needless to say that this puts all samples in the training set and samples whose scores have yet to be observed at risk of privacy breaches. 

Postponing the technical analysis to section \ref{AgnIA_tech}, in Figure \ref{Bank_alone} the performance of agnostic inference attack (solid line) on the bank dataset is compared with two other baselines, namely the half (green solid$+$ line) and half$^*$ (blue solid$\bullet$ line) estimations.
In the former, the passive features are estimated as half (recall that features are normalized into $[0,1]$). In the latter, the half$^*$ estimation method, proposed in \cite{RaVaGu}, is used.
As it is observed from the results in Figure \ref{Bank_alone}, even in the absence of targets' confidence scores, the adversary manages to conduct an inference attack comparably well, especially for larger values of $d$ where the difference with the normal inference attack is negligible.

\begin{figure}[t]
 \centering 
 \scalebox{0.2} 
 {\includegraphics{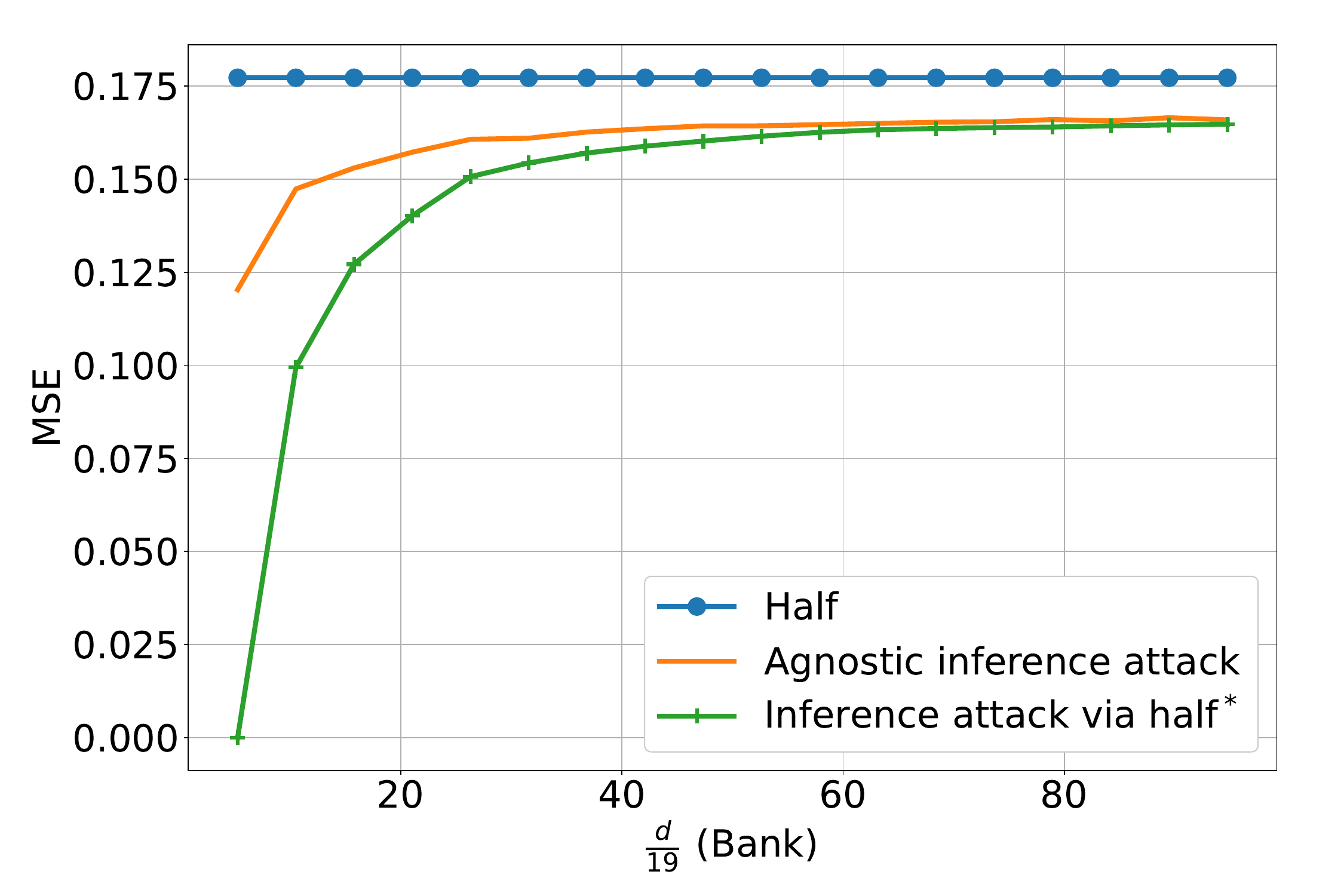}} 
 \caption{Comparison of MSE per feature obtained from half estimation, agnostic inference attack and half$^{*}$ method in \cite{RaVaGu}}
 \label{fig1} \label{Bank_alone}
\end{figure}

\subsubsection{Refined AM}
To further enhance the accuracy of the AM in estimating the targets' confidence scores, the active party can leverage the prediction scores received from CA before carrying out an inference attack on a specific target. These scores are generated by applying the VFL model that has been trained on the combination of both active and passive features.
As a result, they carry a higher level of confidence compared to the scores generated from $f_{\pmb{\theta}_a}$.
To take advantage of this increased reliability, the active party could adopt a training approach where the objective loss function incorporates the scores received from the CA. This can be done in a way that during the AM training, a higher cost is incurred in the event of a mismatch between the refined AM's scores and the scores received from CA.

The refined AM, denoted as $f_{\pmb{\theta}_a^r}$, is trained on the active party's available data by solving the following optimization problem
\begin{align}\label{loss1}
\min_{\pmb{\theta}_a^r} \biggl(&\frac{1}{n_t}\sum_{i=1}^{n_t} H(f_{\pmb{\theta}_a^r}(\mathbf{y}_{t, i}), u_{t,i}) + \frac{\alpha}{n_p}\sum_{j=1}^{n_p} S\left(f_{\pmb{\theta}_a^r}(\mathbf{y}_{p,j}), \mathbf{c}_j\right) + \omega(\pmb{\theta}_a^r)\biggr)
\end{align}

where $\pmb{\theta}_a^r$ denotes the parameters of the refined AM. $n_p$ denotes the total number of prediction samples (whose scores have already been delivered to the active party).
The hyperparameter $\alpha$ is used to stress the level of reliability of the prediction scores from CA as compared to the ones obtained by feeding the corresponding active features into AM.
Additionally, the scoring function, $S(\mathbf{c}^{'}, \mathbf{c})$, is defined as $S(\mathbf{c}^{'}, \mathbf{c}) =\sum_{i=1}^{k} {\log^2 (c^{'}_i/c_i)}$. 


\begin{remark}
The choice of the function $S(\cdot,\cdot)$ in equation (\ref{loss1}) is motivated by the desire to model the similarity (or soft mismatch) between the confidence scores delivered by CA and the ones produced by AM. 
The function $S(\mathbf{c}^{'}, \mathbf{c})$ is a quasiconvex\footnote{A function is quasi-convex if its sublevel sets are convex.} function that attains its minimum value of zero when $\mathbf{c}^{'}= \mathbf{c}$. By setting $\alpha > 0$, one can adjust the significance of samples during the prediction phase. This is essentially equivalent to enforcing the AM to overfit on the prediction confidence scores received from CA.
The value of the hyperparameter $\alpha$ in equation (\ref{loss1}) must be determined by the active party without knowledge of the actual values of the passive features. In our numerical results we consider $\alpha=1$.
\end{remark}

\subsection{Performance Analysis}\label{AgnIA_tech}
Upon agreement on a trained VFL model, during the prediction phase, the active party receives corresponding confidence scores (or a modified version of them, such as a noisy or rounded version, or the corresponding class label).
In the white-box setting, the active party can construct a system of linear equations as described in equation (\ref{eq:1}), where the unknown variables are the passive features.
If the number of unknowns, $d$, is equal to or less than the number of equations, $k-1$, and the confidence scores are provided in their exact form, the active party can recover the passive features by solving the linear system, $\mathbf{A}\mathbf{X}=\mathbf{b}'$.
However, if the confidence scores have been altered to satisfy privacy requirements, or if the number of unknowns, $d$, is greater than the number of equations, $k-1$, causing $\mathbf{A}$ to not have full column rank, there may exist either a different solution or an infinite number of solutions for the linear system, respectively.

A closer examination of (\ref{eq:1}) poses the following question: In a white-box setting, to what extent does the lack of confidence scores from the attack targets contribute to the average reconstruction error in (\ref{MSE})? Or equivalently, can an inference attack still be performed effectively in the absence of the actual confidence score?
In order to answer these questions, we evaluate the performance of an attack when an estimate of a confidence score obtained from AM is used instead of the actual value.
To that end, we consider two scenarios: i) $d<k$, and ii) $d\geq k$. 

\subsubsection{$d<k$}
In this case, we have an overdetermined system of linear equations, i.e., we have more equations ($k-1$) than unknowns ($d$). An overdetermined system can have solutions in some cases, e.g., when some equations are linear combinations of others. In the context of this paper, an example of an overdetermined system with a unique solution is when the active party has all the ingredients needed to solve (\ref{eq:1}). In this case, no matter which $d$ equations are chosen out of $k-1$, it yields the same solution. 
However, depending on how many of the equations are independent of each other, we can have other cases with no solution or infinite solutions. 
The case of interest where the active party does not have the attack target's confidence score, is an example of an overdetermined system with no solution.  
In this case, approximate solutions can be obtained via the method of ordinary least squares, i.e., $\min_{\mathbf{X}}\|\mathbf{AX}-\hat{\mathbf{b}'}\|$, the solution of which can be written with the normal equations as\footnote{Throughout the paper, $\hat{\mathbf{c}}'$ is adversary's estimation of $\mathbf{c}'$, and accordingly, $\hat{\mathbf{b}}'$ is an approximation of $\mathbf{b}'$ induced by $\hat{\mathbf{c}}'$.}
\begin{align}
\hat{\mathbf{X}}_{\textnormal{LS}} = (\mathbf{A}^T\mathbf{A})^{-1}\mathbf{A}^T\hat{\mathbf{b}}'.\label{M17}
\end{align}
With (\ref{M17}), an approximate solution is found when no exact solution exists ($\hat{\mathbf{b}}' \neq \mathbf{b}'$), and it gives an exact solution when one does exist (i.e., when $\hat{\mathbf{b}}' = \mathbf{b}'$).

Assuming that the adversary has access to an estimate of a confidence score, denoted by $\hat{\mathbf{c}}'$, different from the one in equation (\ref{eq:1}), an approximate solution denoted by $\hat{\mathbf{X}}_{\textnormal{LS},a}$ is obtained as below\footnote{Throughout the paper, subscript $a$ in estimations such as $\hat{\mathbf{X}}_{\textnormal{LS},a}$ is used to emphasize that the attack is agnostic.}
\begin{align}
\hat{\mathbf{X}}_{\textnormal{LS},a} &= (\mathbf{A}^T\mathbf{A})^{-1}\mathbf{A}^T(\hat{\mathbf{c}}'-\bold{J}\bold{W}_{\textnormal{act}}\mathbf{Y}-\mathbf{Jb})\\\label{M28}
&= (\mathbf{A}^T\mathbf{A})^{-1}\mathbf{A}^T(\mathbf{b}'+\hat{\mathbf{c}}'-\mathbf{c}').
\end{align}
The MSE of the estimation in (\ref{M28}) reads as (see Appendix \ref{app:1} for the derivation of (\ref{M18}))
\begin{equation}
    \textnormal{MSE}(\hat{\mathbf{X}}_{\textnormal{LS},a})=\frac{1}{d}\textnormal{Tr}\left(\mathbf{A}(\mathbf{A}^T\mathbf{A})^{-2}\mathbf{A}^T\mathbf{K}_{\mathbf{c}',\mathbf{\hat{c}}'}\right)\label{M18},
\end{equation}
where $\mathbf{K}_{\mathbf{c}',\mathbf{\hat{c}}'} \triangleq \mathds{E}[(\hat{\mathbf{c}}'-\mathbf{c}')(\hat{\mathbf{c}}'-\mathbf{c}')^T]$. 
From (\ref{M18}), it is easy to note that when the active party has the exact form of the scores, the MSE in this case is zero, i.e., perfect reconstruction. 
Additionally, note that (\ref{M18}) is non-negative due to the non-negativity of MSE.  

\subsubsection{$d\geq k$}
In \cite{RaVaGu}, several inference attack methods were evaluated, including RCC1, RCC2, CLS, and half$^*$. These methods were found to significantly improve the adversary's estimation quality, resulting in lower MSE compared to the methods discussed in \cite{Xinjian, Jiang}.
While the results and discussions for agnostic inference attack in this paper are general and apply to any of the feature estimation methods, for the purposes of this paper, we specifically focus on using the half$^*$ method.
This is because, on the one hand, the half$^*$ method is analytically feasible, making it easier to study. On the other hand, as indicated in \cite[Figure 4]{RaVaGu}, the half$^*$ reconstruction method was found to have the best performance, or at least performance comparable to the best method. Thus, before proceeding with our analysis, we briefly outline the half$^*$ estimation method.

Any solution of an underdetermined system of linear equations ($d\geq k$) can be written as $\mathbf{A}^{+}\mathbf{b}'+(\mathbf{I}_d-\mathbf{A}^+\mathbf{A})\mathbf{w}$ for some $\mathbf{w}\in\mathbb{R}^d$, where $\mathbf{A}^+$ denotes the pseudoinverse of $\mathbf{A}$ satisfying the Moore-Penrose conditions\cite{penrose}\footnote{When $\mathbf{A}$ has linearly independent rows, we have $\mathbf{A}^+=\mathbf{A}^T(\mathbf{AA}^T)^{-1}$.}. 
As elaborated further in \cite{RaVaGu}, when the only available information about $\mathbf{X}$ is that it belongs to $[0,1]^d$, then $\hat{\mathbf{X}}_\textnormal{half}=\frac{1}{2}\mathbf{1}_d$ is optimal in the best-worst sense.
Indeed, the naive estimate of $\frac{1}{2}\mathbf{1}_d$, is the Chebyshev center of $[0,1]^d$. The adversary can perform better when the side information $(\mathbf{b}',\mathbf{A})$ is available. The half$^*$ scheme is built on top of the naive $\frac{1}{2}\mathbf{1}_d$ estimation as follows. 
The estimator finds a solution in the solution space that is closest to $\frac{1}{2}\mathbf{1}_d$. In \cite[Proposition 1]{RaVaGu} it is shown that
\begin{equation}\label{half*}
   \hat{\mathbf{X}}_{\textnormal{half}^*}= \mathbf{A}^{+}\mathbf{b}'+\frac{1}{2}(\mathbf{I}_d-\mathbf{A}^+\mathbf{A})\mathbf{1}_d.
\end{equation}

As discussed in section \ref{AgnIA_AM}, although the confidence score is not available to the adversary, they manage to obtain an estimate of it via AM. Under these circumstances, to obtain an estimate of $\mathbf{c}'$, the adversary follows the lines similar to equations (\ref{qe1}), (\ref{JJ}), (\ref{eqeq1}) and (\ref{eq:1}). We have 
\begin{align}
    \bold{JW}_{a}\bold{Y} &= \mathbf{J}\mathbf{z}_a-\mathbf{J}\mathbf{b}_a\nonumber\nonumber\\
    &=\mathbf{\hat{c}}'-\mathbf{J}\mathbf{b}_a,
    \end{align}
where the subscript $a$ in $\mathbf{W}_a, \mathbf{b}_a, \mathbf{z}_a$ is to clarify the attribution of these parameters to AM. Noting that $\mathbf{\hat{c}}'$ is an estimate of $\mathbf{c}'$, the adversary conducts agnostic inference attack using half$^*$ method. Denoting this estimation by $\hat{\mathbf{X}}_{\textnormal{half}^*,a}$, we have
\begin{align}\label{M27}
    \hat{\mathbf{X}}_{\textnormal{half}^*,a} =& \mathbf{A}^+ (\mathbf{\hat{c}}'-\bold{J}\bold{W}_{\textnormal{act}}\mathbf{Y}-\mathbf{Jb})+ \frac{1}{2}(\mathbf{I}-\mathbf{A}^+\mathbf{A})\mathbf{1}_d\\
    =& \hat{\mathbf{X}}_{\textnormal{half}^*} + \mathbf{A}^+(\mathbf{\hat{c}}' - \mathbf{c}').\label{M35}
\end{align}
The MSE resulted from the estimation in (\ref{M35}) reads as (see Appendix \ref{app:2} for the derivation of (\ref{M2}))
\begin{align}
\textnormal{MSE}(\hat{\mathbf{X}}_{\textnormal{half}^*, a})= \frac{1}{d}\textnormal{Tr}\left(\left(\mathbf{I}-\mathbf{A}^+\mathbf{A}\right)\mathbf{K}_{\frac{1}{2}\mathbf{1}}\right)+ \frac{1}{d}\textnormal{Tr}\left({\mathbf{A}^+}^T\mathbf{A}^{+}\mathbf{K}_{\mathbf{c}',\mathbf{\hat{c}}'}\right)\label{M2}.
\end{align}
From (\ref{M2}) it is clear that the adversary's ability to carry out agnostic inference attack is dependent on the accuracy of the estimation of a target confidence score, as reflected through the term $\mathbf{K}_{\mathbf{c}',\mathbf{\hat{c}}'}$. 
The adversary's performance in this scenario cannot surpass the performance obtained when the exact form of confidence scores is known (see Appendix \ref{app:2}).
Note that, neither the active party nor the passive party have access to the exact form of the confidence scores or the parameters of AM, respectively, making it impossible for them to determine the exact value of (\ref{M2}). 
Despite this, we utilize (\ref{M2}) in section \ref{NR} to obtain our numerical results when evaluating the adversary's performance. 



It is worth mentioning here that the replacement of $\mathbf{c}'$ with $\hat{\mathbf{c}'}$ in (\ref{eq:1}) may result in some components of the final solution falling outside of the feasible region. 
To handle such cases, there are multiple strategies that can be implemented, including but not limited to: i) Mapping the individual components of the solution to the closest value of 0 or 1 if they are outside of the feasible region, ii) Mapping those individual components to the value of 0.5, or iii) Mapping all the passive features to 0.5 once an estimated feature of a sample falls outside the feasible region.
The best strategy depends on the underlying data statistics and requires deeper knowledge of the data which is indeed not available at the active party. Therefore, in order to ensure fairness in our numerical results, this paper adopts the first strategy in all datasets used in our experiments. 

\begin{remark}\label{rem:LR}
(Why LR and Extension to NN): We focus on LR as the primary model of interest for three reasons.
First, LR remains widely deployed in VFL applications where interpretability and auditability are essential, such as finance, healthcare, and risk assessment \cite{LR1,LR2,LR3, LR4}.
Second, LR admits a tractable analytical characterization, which enables explicit derivations of feature reconstruction error and principled optimization-based PPSs.
Third, softmax LR constitutes the final prediction layer in many NN architectures, making the notion of confidence scores and their approximation via adversary models directly relevant beyond pure LR. To elaborate further on this, in the following of this Remark, we consider a thought experiment on shallow NNs. Consider a shallow NN with two fully connected layers and $\tanh(\cdot)$ activations,
\[
c = \mathrm{softmax}\!\left(W_2 \tanh(W_1 \tilde{x} + b_1) + b_2\right),
\]
where $\tilde{x} = [y^\top, x^\top]^\top$ concatenates active and passive features.
In this setting, the active party can still train an AM or RAM using only local features and labels (and possibly observed scores) to approximate the confidence scores $c$.
However, unlike LR, the mapping from passive features $x$ to the logits becomes nonlinear. If the activation function is invertible (e.g., $\tanh$), and the passive contribution enters the network in a structurally identifiable manner, feature reconstruction could in principle be formulated as a nonlinear inverse problem.
In contrast, for non-invertible activations such as ReLU, the many-to-one mapping introduces intrinsic ambiguity, which can be interpreted as an additional source of distortion that may further degrade reconstruction accuracy.\footnote{For ReLU activations, the loss of injectivity implies that multiple passive feature values can yield identical activations, even when confidence scores are known.} A systematic treatment of these aspects is left for future work.
\end{remark}

\begin{remark}\label{rem:Col}
(Collusion and Auxiliary Information): Collusion between the active party and the CA or other parties can be modeled as an enlarged semi-honest adversarial coalition \cite{Wu2018PPGNN, Mohassel2017SecureML, Chowdhury2020Crypt}, where parties follow the protocol but share their views.
In this case, the active party may gain access to additional side information, such as partial passive features or additional confidence scores. Such auxiliary information can be incorporated into the RAM by augmenting the training objective with additional alignment or regularization terms, similar in spirit to auxiliary-data-based learning approaches \cite[see algorithm 2]{Jiang}.
Naturally, increased side information can improve the quality of confidence score approximation and, consequently, the effectiveness of inference attacks. We emphasize that adversarial models involving actively malicious behavior or protocol deviations by the CA fall outside the honest-but-curious threat model assumed in this paper and are left for future investigation.
\end{remark}

\begin{remark}\label{rem:Motiv}
(Motivation and implications of agnostic attacks):
In many real deployments—such as credit scoring (see Figure~\ref{fig122}), medical-risk prediction, or e-commerce recommendation—the passive features may include sensitive financial, behavioural, or clinical attributes that the passive party is unwilling to reveal directly. Using agnostic attacks, even when confidence scores corresponding to the target samples are not revealed, the active party can still reconstruct passive features using only its local data and model outputs. 
If the reconstructed features correlate strongly with sensitive attributes (e.g., income proxies in credit scoring or comorbidity indicators in medical diagnosis), this may lead to concrete privacy harms such as profiling, discrimination, or unintended leakage of protected characteristics. The effectiveness of agnostic attacks depends primarily on the statistical structure of the joint feature space. 
The risk is highest when (i) the active party holds a large portion of the predictive signal (labels and moderately informative features), and (ii) strong correlation exists between active and passive features. 
In such regimes, the AM and RAM can approximate score behaviour sufficiently well to support accurate linear reconstruction of passive features, even without access to the true confidence scores. This will be elaborated further in Section \ref{NR2}.
\end{remark}


\section{Privacy against agnostic inference attack}\label{PPS}
In the literature a number of algorithms have been proposed for privacy in the context of VFL in the white-box setting, such as those proposed in \cite{Xinjian, Jiang, RaVaGu}.
The majority of these algorithms have attempted to alter the confidence scores delivered to the active party in some way.
However, as demonstrated in Figure \ref{Bank_alone} and discussed in section \ref{AgnIA}, simply modifying the confidence scores is not enough to defend against agnostic inference attacks, where the adversary can build a separate classifier and improve their attack as they collect more scores from the CA. As such, it becomes clear that this approach is insufficient and leaves the passive party vulnerable to such attacks.

The central importance of the passive party's parameters in any potential adversarial attack highlights the need for effective defense techniques that specifically target these parameters.
A straightforward approach would be to implement a black box setting, in which the active party is not privy to the passive party's parameters, $\mathbf{W}_\textnormal{pas}$.
However, this approach sacrifices interpretability of the VFL model, which is a critical aspect in many applications where the client or user demands a clear explanation for the decisions made by the machine.
This creates a tension between the active party, who desires a white-box setting for interpretability purposes, and the passive party, who values privacy and prefers a black box setting.
Therefore, there is a need for a more nuanced and balanced approach to designing PPSs in the context of VFL, one that strikes a delicate balance between privacy and interpretability, and minimizes the cost of collaboration for all involved parties.  The following sections present an effort towards realizing this elusive goal of balancing privacy and interpretability in VFL.

The question we aim to address is: Can a PPS be designed that creates a balance between preserving the privacy of the passive party and enabling the interpretability capabilities of the active party?
To achieve this, we take into account the potential for agnostic inference attacks by an adversary using either (\ref{M28}) or (\ref{M27}) in different scenarios. Our proposed PPSs aim to systematically manipulate the parameters $\mathbf{W}_\textnormal{pas}$ in order to partially address the concerns of both parties and provide a mutually acceptable outcome. 

In the following, in subsection \ref{Privacy_prel}, we first review a particular case of optimization algorithms. Additionally, we provide an informal discussion on interpretability of parameters in LR. Then, in section \ref{Privacy_algo}, in a case-by-case fashion, we cover the technical analysis of the proposed PPSs. 

\begin{remark}\label{rem:Bla}
(Black-box Passive Parameters): The inference attacks analyzed in this paper rely on explicit access to the passive party’s parameters in order to construct linear systems for feature reconstruction.
In a black-box setting, where the active party does not know the passive parameters, this algebraic reconstruction mechanism is no longer directly applicable.
As a result, both reconstruction accuracy and interpretability at the active party are substantially reduced, while privacy for the passive party is strengthened. Although black-box access does not preclude other forms of attacks (e.g., model extraction or surrogate learning via adaptive queries), such attacks rely on fundamentally different assumptions and techniques.
Accordingly, this work focuses on the white-box regime, where the privacy--interpretability trade-off is most pronounced and analytically tractable.
\end{remark}

\subsection{Preliminaries}\label{Privacy_prel}
\subsubsection{Optimization problems with orthogonality constraint}\label{pr2}
Here, we provide a brief overview of an algorithm employed in this paper to determine the well-performing design parameters of one of the proposed PPSs. This algorithm is a specific instance of a larger class of optimization problems that incorporate an orthogonality constraint, specifically optimization problems over Stiefel manifolds.
While this particular case is reviewed in the sequel, for a more in-depth and comprehensive understanding of the geometry of the problem and the solver algorithms, we refer the readers to \cite{slpg3} and \cite{slpg1, slpg2}, respectively.

Consider the following optimization problem with an orthogonality constraint: 
\begin{equation}\label{M3}
\min_{\substack{\mathbf{R}:\\\mathbf{R}^T\mathbf{R}=\mathbf{I}_d}}\ \ f(\mathbf{R}),
\end{equation}
where $f: \mathbb{R}^{d\times d} \rightarrow \mathbb {R}$ satisfies the so-called \textit{blanket assumption}, i.e., the function $f(\mathbf{R})$ is differentiable and $\frac{\partial f}{\partial \mathbf{R}}$ is Lipschitz continuous\footnote{Assume two metric spaces $(\pmb{\mathcal{R}}, d_{\mathbf{R}})$ and $(\mathcal{X}, dx)$, where $d_{\mathbf{R}}$ and $dx$ denote the metrics on $\pmb{\mathcal{R}}$ and $\mathcal{X}$, respectively. A function $f: \pmb{\mathcal{R}} \rightarrow \mathcal{X}$ is called Lipschitz continuous if there exists a real constant $K \geq 0$, such that for all $\mathbf{R}_1$ and $\mathbf{R}_2$ in $\pmb{\mathcal{R}}$, we have \[d_x(f(\mathbf{R}_1),f(\mathbf{R}_2))\leq K\cdot d_{\mathbf{R}}(\mathbf{R}_1,\mathbf{R}_2)\].}. The orthogonality constraint in (\ref{M3}) can be expressed as $\mathcal{S}_d=\{\mathbf{R}\in \mathbb{R}^{d\times d}|\mathbf{R}^T\mathbf{R} = \mathbf{I}_d\}$, and is denoted as Stiefel manifold in real matrix space. 

\begin{remark}
The problem outlined in equation (\ref{M3}) plays a crucial role in various applications, such as ``discretized Kohn-Sham energy minimization'' and ``unsupervised feature selection.'' There are numerous methods available to address this problem, the majority of which utilize a trust-region strategy with quadratic approximation. 
In our numerical experiments, we utilize the PySTOP module based on the methods described in \cite{slpg1, slpg2}.  
\end{remark}

Almost all the algorithms tackling problems of the type (\ref{M3}), require gradient of the objective function $f(\mathbf{R})$ wrt the set $\mathcal{S}_d$. Denoting this gradient by $\nabla_s f(\mathbf{R})$, it is shown in \cite[eq. (2.53)]{slpg3}, that it is of the form 
\begin{equation}\label{M4}
\nabla_s f(\mathbf{R}) = \frac{\partial f}{\partial \mathbf{R}}-\mathbf{R}\left(\frac{\partial f}{\partial \mathbf{R}}\right)^T\mathbf{R}.
\end{equation}
One can view (\ref{M4}) as correcting the nominal gradient $\frac{\partial f}{\partial \mathbf{R}}$ by subtracting off its projection, i.e., $(\frac{\partial f}{\partial \mathbf{R}})^T\mathbf{R}$ onto the current solution $\mathbf{R}$. 
This ensures that the corrected gradient $\nabla_s f(\mathbf{R})$ is tangent to the manifold. 

In the following, we provide an analytically tractable example of this optimization (\ref{M3}) in the context of PPSs in VFL. As it will be elaborated later in section \ref{Privacy_algo}, providing analytical solutions for more interesting problems are still open if not infeasible. Therefore, in such cases we resort to iterative solver algorithms. 

\begin{example}
PPS for the least squares method: We aim at designing a PPS against an adversary who uses a naive least squares method. To that end, the passive parameters $\mathbf{W}_{\textnormal{pas}}$ are transformed via a secret orthonormal matrix $\mathbf{R}$ ($\mathbf{R}^T \mathbf{R} = \mathbf{I}$), and it is the transformed version $\mathbf{W}_{\textnormal{pas}} \mathbf{R}$ which is revealed to the active party. The adversary, having only the transformed parameters, forms the following system of equations $\mathbf{A}_n \mathbf{X} = \mathbf{b}'$ where $\mathbf{A}_n = \mathbf{JW}_{\textnormal{pas}} \mathbf{R} = \mathbf{AR}$\footnote{We assume that the adversary has been given or has estimated the exact form of the attack targets' confidence score.}.
Using least squares estimation method, the adversary obtains the estimated features as $\hat{\mathbf{X}}_{\textnormal{LS},a}^{\textnormal{PPS}} = \mathbf{A}_n^{+}\mathbf{b}'$\footnote{Superscript $\textnormal{PPS}$ is used to note that a privacy-preserving technique is in place at the time of an attack.}.
The MSE resulted from least squares inference attack when PPS is in place reads as (see Appendix \ref{app:3} for the derivation of (\ref{M26}))
\begin{equation}\label{M26}
   \textnormal{MSE}(\hat{\mathbf{X}}_\textnormal{LS}^\textnormal{PPS})=\textnormal{Tr}(\mathbf{I}+\mathbf{A}^{+}\mathbf{A})-2\textnormal{Tr}(\mathbf{R}\mathbf{K}_\mathbf{0}\mathbf{A}^{+}\mathbf{A}).
\end{equation}
To maximize $\textnormal{MSE}(\hat{\mathbf{X}}_{\textnormal{LS}}^{\textnormal{PPS}})$, and hence deteriorate the adversary's attack performance, we equivalently solve the following optimization problem:
\begin{equation}\label{M5}
\min_{\substack{\mathbf{R} \\ \mathbf{R}^T \mathbf{R} = \mathbf{I}_d}} f_{\textnormal{LS},a}^{\textnormal{PPS}}(\mathbf{R}) \triangleq \textnormal{Tr}(\mathbf{R} \mathbf{K}_0 \mathbf{A}^+ \mathbf{A}).
\end{equation}
Noting that the objective function in (\ref{M5}) is differentiable, and its derivative $\frac{\partial f_{\textnormal{LS},a}^{\textnormal{PPS}}}{\partial \mathbf{R}} = \mathbf{A}^+\mathbf{A}\mathbf{K}_{\mathbf{0}}$ is Lipschitz continuous\footnote{$\|\mathbf{A}^+\mathbf{A}\mathbf{K}_{\mathbf{0}}-\mathbf{A}^+\mathbf{A}\mathbf{K}_{\mathbf{0}}\|=0\leq K\cdot \|\mathbf{R}_1-\mathbf{R}_2\|$ for any $K\geq 0$ and any matrix norm.}, we have 
\begin{align}\label{M6}
   \nabla_s f_{\textnormal{LS},a}^{\textnormal{PPS}}&=\mathbf{A}^{+}\mathbf{A}\mathbf{K}_\mathbf{0} - \mathbf{R}\mathbf{K}_\mathbf{0}\mathbf{A}^{+}\mathbf{A}\mathbf{R}.
\end{align}

Let $ \mathbf{U}\pmb{\Sigma}\mathbf{V}^T$ be an SVD of $\mathbf{A}^{+}\mathbf{A}\mathbf{K}_\mathbf{0} $. It is easily verified that there are two candidates making the Stiefel gradient in (\ref{M6}) equal to zero, i.e., $\mathbf{R}=\pm \mathbf{UV}^T$. By substituting the candidate solutions in (\ref{M5}) we have 
\begin{equation}
   \textnormal{Tr}(\mathbf{R}\mathbf{K}_\mathbf{0}\mathbf{A}^+\mathbf{A})=\pm\textnormal{Tr}(\mathbf{UV}^T\mathbf{V}\pmb{\Sigma}\mathbf{U}^T)=\pm\textnormal{Tr}(\mathbf{U}\pmb{\Sigma}\mathbf{U}^T)=\pm\sum_{i=1}^{d}\sigma_i,
\end{equation}
where $\sigma_i$ is the i-th singular value of $\mathbf{A}^{+}\mathbf{A}\mathbf{K}_\mathbf{0}$. Noting that the singular values of a matrix are non-negative, the minimiser of (\ref{M5}) is $\mathbf{R} = -\mathbf{UV}^T$.
\end{example}

\begin{remark}
    A similar problem has been considered in \cite{RaVaGu}, where instead of transforming the parameters, the passive party transforms the features. In that study, the proof approach relies on Von Neumann's trace inequality, which can serve as an alternative to the optimization over Stiefel manifolds discussed in this paper. However, as will be made evident in the following sections, the optimization over Stiefel manifolds has a distinct advantage in that it allows for the treatment of a wide range of more complex problems through the use of iterative algorithms.
\end{remark}

\begin{remark}\label{rem1}
    It is worth noting that the example discussed in this section has a couple of limitations. Firstly, the assumption that the adversary will use the least squares method is unrealistic as there are other inference attack methods that are more advanced and efficient. Secondly, transforming the parameter (or feature space, as discussed in \cite{RaVaGu}), results in a significant and uncontrolled distortion of the parameters. This can make it difficult for the active party to interpret the results of the VFL model, and may even obstruct their ability to justify the outcomes in certain circumstances. For example, in the Bank-FinTech scenario, the bank may be required to provide a justification for rejecting a customer's credit card application. Thus, it is crucial that these limitations are taken into consideration when designing any PPS so as to ensure the active party's concerns are addressed.
\end{remark}

\subsubsection{Interpretability of the VFL model at the active party}\label{Privacy_Prel_Inte}
The increasing use of machine learning-based automation systems in sectors like government, banking, healthcare, and education presents both benefits and challenges. A significant challenge in VFL involves the interpretability of decisions made by models trained with PPSs, especially when these decisions impact legal, insurance, or health matters. A lack of outcome justifiability may erode trust between users and the system.

Interpretability is crucial in ML and pertains to understanding the relationship between a model's parameters and its outcomes. Simpler models such as linear and logistic regression, decision trees, and Lasso/Ridge regression are often favored over complex deep learning models for their interpretability when their performance is comparable. For instance, in logistic regression, the odds ratio between two classes for a sample can be calculated using $e^{(\mathbf{w}_i - \mathbf{w}_j) \mathbf{x}_1}$, where $\mathbf{w}_i$ is the $i$-th row of the parameter matrix $[\mathbf{W}_{\textnormal{act}}, \mathbf{W}_{\textnormal{pas}}]$. Consider a Bank-FinTech scenario where two customers, $\mathbf{x}_1$ and $\mathbf{x}_2$, have similar active features but differ in passive features; if one is approved and the other rejected for a credit application, the bank must justify the decision by the difference in passive features and the resultant change in the odds ratio, expressed as $e^{(w_{i,d_f} - w_{j,d_f})(x_{1,d_f} - x_{2,d_f})}$, where $d_f$ denotes the index of a specific passive feature.

\subsection{Privacy-preserving schemes (PPS)}\label{Privacy_algo}
When evaluating the effectiveness of agnostic inference attacks, we employ a ``genie-aided'' approach where the adversary is assumed to know the exact form of the confidence scores. This assumption is necessary because the performance of agnostic inference attacks as described in (\ref{M18}) and (\ref{M2}) heavily relies on the matrix $\mathbf{K}_{\mathbf{c}, \hat{\mathbf{c}}'}$. Given that the passive party lacks access to $\hat{\mathbf{c}}'$, precise calculations cannot be made without this assumption. By considering a genie-aided scenario, we recognize the adversary's potential capabilities to accurately estimate $\hat{\mathbf{c}}'$, whether through their own resources or via collusion with data owners as suggested by \cite{Xinjian}. Therefore, our objective is to mitigate the adversary's optimal attack performance under these conditions. It is advantageous for the passive party to develop PPSs that leverage the availability of $\mathbf{c}'$ at the active party to enhance privacy protection.

In designing the PPSs, our goal is to maintain the interpretability of passive parameters. To measure the distance between the accessible parameter set $\mathbf{W}_n$ and the original passive parameters $\mathbf{W}_{\textnormal{pas}}$, we define the function
\begin{equation}\label{M16}
g(\mathbf{R}) = \frac{1}{dk} \textnormal{Tr}\left((\mathbf{W}_{\textnormal{pas}}-\mathbf{W}_n)^T(\mathbf{W}_{\textnormal{pas}}-\mathbf{W}_n)\right),
\end{equation}
where $\mathbf{W}_n$ represents the parameters given to the active party instead of $\mathbf{W}_{\textnormal{pas}}$. \footnote{The function $g$ will take different arguments in the sequel depending on the mapping considered from $\mathbf{W}_{\textnormal{pas}}$ to $\mathbf{W}_n$.} To achieve our objective, we consider the following optimization problem:
\begin{equation}
\min_{\substack{\mathbf{R}:\\\mathbf{R}^T\mathbf{R}=\mathbf{I}_d}} f(\mathbf{R}) + \lambda \left(g(\mathbf{R}) - \varepsilon\right)^2,
\end{equation}
where $f(\mathbf{R})$ is a function such that its minimization corresponds to maximizing the overall MSE. Here, \(\lambda\) acts as a penalizing factor, imposing a certain level of distortion on the parameters while optimizing.

We study the PPSs in the following four distinct, disjoint cases:
\begin{itemize}
    \item[\textit{i})] For $d \geq k > 2$, we propose a heuristic approach based on transforming the passive parameters via orthonormal matrices.
    \item[\textit{ii})] For $1 < d < k$, a similar heuristic approach is applied.
    \item[\textit{iii})] For $d = 1$, we attempt to tackle the general problem without specific assumptions on the mapping from passive parameters to those accessible to the active party.
    \item[\textit{iv})] For $k = 2$ and $d > 1$, the approach again addresses the general problem without particular mapping assumptions.
\end{itemize}

\subsubsection{\textbf{Case} \pmb{$d\geq k>2$}}
In this case, the passive party's parameters $\mathbf{W}_\textnormal{pas}$ are transformed as \[\mathbf{W}_n = \mathbf{W}_\textnormal{pas}\mathbf{R},\] where $\mathbf{R}$ is assumed to be an orthonormal matrix, namely $\mathbf{R}^T\mathbf{R} = \mathbf{I}$\footnote{The assumption of orthonormality for the transforming matrix $\mathbf{R}$ ensures that our algorithm design remains analytically tractable. Without this assumption, solving the given problem becomes cumbersome. This complexity arises because, generally, for two matrices $\mathbf{A}_{k-1 \times d}$ and $\mathbf{R}_{d \times d}$, the equality $(\mathbf{AR})^{+} = \mathbf{R}^{-1}\mathbf{A}^{+}$ does not hold. There are some sufficient conditions under which this equality is valid, one of which is the orthonormality of $\mathbf{R}$ \cite{Aplus}. We leave the investigation of the general case of the transformer matrix $\mathbf{R}$ as an open problem for future research.\label{FN1}
}. 
Therefore, the function in (\ref{M16}) reads as
\begin{align}
    g(\mathbf{R}) & = \frac{1}{dk}\textnormal{Tr}(2\mathbf{W}-(\mathbf{R}+\mathbf{R}^T)\mathbf{W})\nonumber\\
    & = \frac{2}{dk}\textnormal{Tr}((\mathbf{I}-\mathbf{R})\mathbf{W})\label{M7},
\end{align}
where (\ref{M7}) is due to $\mathbf{R}^T\mathbf{R} = \mathbf{I}$, $\textnormal{Tr}(\mathbf{R}^T\mathbf{W}) = \textnormal{Tr}(\mathbf{RW})$ and $\mathbf{W}\triangleq \mathbf{W}_{\textnormal{pas}}^T\mathbf{W}_{\textnormal{pas}}$. 

Once the adversary receives the passive party's transformed parameters $\mathbf{W}_\textnormal{pas}\mathbf{R}$, they \textit{form} a system of linear equations $\mathbf{A}_n\mathbf{X}=\mathbf{b}'$\footnote{Here, we emphasize on the word ``form'' as this is passive party's deception of active party. Given $\mathbf{X}$, the true equation is $ \mathbf{AX}=\mathbf{b}'$.}, where $\mathbf{A}_n = \mathbf{JW}_\textnormal{pas}\mathbf{R} = \mathbf{AR}$. 
Note that $\mathbf{b}'$ is still the same as in (\ref{eq:1}). By applying half$^*$ method on this system, we have
\begin{align}
\hat{\mathbf{X}}_{\textnormal{half}^*,a}^{\textnormal{PPS}} &= \mathbf{A}^{+}_n\mathbf{b}' + \frac{1}{2}(\mathbf{I}-\mathbf{A}^{+}_n\mathbf{A}_n)\mathbf{1}_d\nonumber\\
&= \mathbf{R}^{T}\mathbf{A}^+\mathbf{b}' + \frac{1}{2}(\mathbf{I}-\mathbf{R}^{T}\mathbf{A}^{+}\mathbf{A}\mathbf{R})\mathbf{1}_d.
\end{align}
Therefore, the adversary's MSE corresponding to half$^*$ estimations reads as (see Appendix \ref{app:4} for the derivation of (\ref{eq1})) 
\begin{equation}\label{eq1}
    \textnormal{MSE}(\hat{\mathbf{X}}_{\textnormal{half}^*,a}^{\textnormal{PPS}})= \frac{1}{d}\textnormal{Tr}(\mathbf{K}_{\frac{1}{2}\mathbf{1}})-\frac{1}{d}\textnormal{Tr}\left(\mathbf{A}^{+}\mathbf{A}\left(2\mathbf{K}_\mathbf{0}\mathbf{R}^T-\mathbf{K}_\mathbf{0}-\mathbf{RMR}^T\right) \right).
\end{equation}

Our goal is to maximise the term in (\ref{eq1}) over orthonormal matrices $\mathbf{R}\in \mathcal{S}_d$, such that a certain level of interpretability is additionally preserved at the active party, i.e., $g(\mathbf{R})=\varepsilon$. 
To that end, we consider the following optimization problem
\begin{equation}\label{ob1}
\min_{\substack{\mathbf{R}:\mathbf{R}\in\mathcal{S}_d}}\ \ f_{\textnormal{half}^*,a}^{\textnormal{PPS}}(\mathbf{R})+\lambda (g(\mathbf{R})-\varepsilon)^2,
\end{equation}
where $f_{\textnormal{half}^*,a}^{\textnormal{PPS}}(\mathbf{R})$ denotes the second term in (\ref{eq1}). Note that $\varepsilon$ is a design parameter introduced into the model, potentially as part of a joint agreement between the active and passive parties. However, $\lambda \geq 0$ is an optimization tuning parameter, which should be considered large enough to enforce the equality $g(\mathbf{R})=\varepsilon$.

 
\begin{remark}\label{rem3}
The way the adversary obtains a system of linear equations in (\ref{eq:1}) is not unique; however, all such systems can be expressed in a form similar to (\ref{eq:1}) via $\mathbf{J}_{\textnormal{new}} = \mathbf{TJ}$, with $\mathbf{T}_{(k-1) \times (k-1)}$ being an invertible matrix, secretly used by the adversary. From this standpoint, it is crucial that, regardless of the choice of $\mathbf{T}$, the MSE in (\ref{eq1}), upon which the optimization in (\ref{ob1}) is based, remains consistent with the MSE experienced by the adversary. 
Therefore, any designed PPS should be independent of any transformation matrix $\mathbf{T}$ used by the adversary, as otherwise, intended privacy (certain MSE) and interpretability ($g(\mathbf{R}) = \varepsilon$) concerns are not guaranteed. From the objective function in (\ref{ob1}), the independence of $g(\mathbf{R})$ is obvious. For the function $f_{\textnormal{half}^*,a}^{\textnormal{PPS}}(\mathbf{R})$, note that its dependency on $\mathbf{J}$ is via $\mathbf{A}^+\mathbf{A}$.
Hence, if the adversary forms a new matrix $\mathbf{A}_{\textnormal{new}} = \mathbf{J}_{\textnormal{new}}\mathbf{W}_{\textnormal{pas}} = \mathbf{TJ}\mathbf{W}_{\textnormal{pas}}$, we have 
\begin{align}
    \mathbf{A}^{+}_\textnormal{new}\mathbf{A}_\textnormal{new}&=(\mathbf{J}_\textnormal{new}\mathbf{W}_\textnormal{pas})^+(\mathbf{J}_\textnormal{new}\mathbf{W}_\textnormal{pas})\nonumber=(\mathbf{TJ}\mathbf{W}_\textnormal{pas})^+(\mathbf{TJ}\mathbf{W}_\textnormal{pas})\nonumber\\
    &=(\mathbf{J}\mathbf{W}_\textnormal{pas})^+\mathbf{T}^{-1}\mathbf{T}(\mathbf{JW}_\textnormal{pas})=\mathbf{A}^{+}\mathbf{A}.\nonumber
\end{align}
Therefore, the solution of the objective function is independent of any orthogonal transformation of $\mathbf{J}$.
\end{remark}

If the blanket assumption is satisfied on the objective of the minimization problem in (\ref{ob1}), it fits within the category of the optimization problems reviewed in section \ref{Privacy_prel}. Continuity is obvious. The following proposition ascertains the Lipschitz continuity.
\begin{proposition}\label{prop2}
The objective function in (\ref{ob1}) is Lipschitz continuous.
\end{proposition}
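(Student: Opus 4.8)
The plan is to exploit the fact that the objective of (\ref{ob1}), call it $\Phi(\mathbf{R}) \triangleq -f_{\textnormal{half}^*,a}^{\textnormal{PPS}}(\mathbf{R}) + \lambda(g(\mathbf{R}) - \epsilon)^2$, is a polynomial in the entries of $\mathbf{R}$ over a compact domain, so that Lipschitz continuity follows from boundedness of its gradient. First I would record the explicit algebraic form of each piece. By definition $f_{\textnormal{half}^*,a}^{\textnormal{PPS}}(\mathbf{R})$ is the second term of (\ref{eq1}), namely $\frac{1}{d}\textnormal{Tr}(\mathbf{A}^{+}\mathbf{A}(\mathbf{K}_\mathbf{0} - 2\mathbf{K}_\mathbf{0}\mathbf{R}^T + \mathbf{R}\mathbf{M}\mathbf{R}^T))$, in which $\mathbf{A}^{+}\mathbf{A}$, $\mathbf{K}_\mathbf{0}$ and $\mathbf{M}$ are fixed matrices that do not depend on $\mathbf{R}$; hence $f_{\textnormal{half}^*,a}^{\textnormal{PPS}}$ is the sum of a constant, a term linear in $\mathbf{R}$ (the trace $\textnormal{Tr}(\mathbf{A}^{+}\mathbf{A}\mathbf{K}_\mathbf{0}\mathbf{R}^T)$), and a term quadratic in $\mathbf{R}$ (the trace $\textnormal{Tr}(\mathbf{A}^{+}\mathbf{A}\mathbf{R}\mathbf{M}\mathbf{R}^T)$). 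Likewise, from (\ref{M7}) the map $g(\mathbf{R}) = \frac{2}{dk}\textnormal{Tr}((\mathbf{I}-\mathbf{R})\mathbf{W})$ is affine in $\mathbf{R}$, so $(g(\mathbf{R})-\epsilon)^2$ is quadratic. Consequently $\Phi$ is a polynomial of degree two in the $d^2$ entries of $\mathbf{R}$ with constant coefficients, and in particular $\Phi \in C^{\infty}(\mathbb{R}^{d\times d})$.

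Next I would invoke compactness of the feasible set. Every $\mathbf{R} \in \mathcal{S}_d$ satisfies $\|\mathbf{R}\|_F^2 = \textnormal{Tr}(\mathbf{R}^T\mathbf{R}) = \textnormal{Tr}(\mathbf{I}_d) = d$ in the Frobenius norm, so $\mathcal{S}_d$ is contained in the closed ball $\mathcal{B} \triangleq \{\mathbf{R}\in\mathbb{R}^{d\times d} : \|\mathbf{R}\|_F \le \sqrt{d}\}$, which is compact and convex. Since $\Phi$ is continuously differentiable, its Euclidean gradient $\partial\Phi/\partial\mathbf{R}$, itself an affine (hence continuous) function of $\mathbf{R}$, is bounded on the compact set $\mathcal{B}$, say $\sup_{\mathbf{R}\in\mathcal{B}}\|\partial\Phi/\partial\mathbf{R}\|_F \le K$. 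For any $\mathbf{R}_1,\mathbf{R}_2 \in \mathcal{S}_d \subseteq \mathcal{B}$, the segment $t \mapsto (1-t)\mathbf{R}_1 + t\mathbf{R}_2$, $t\in[0,1]$, remains in the convex set $\mathcal{B}$, so the mean value inequality yields $|\Phi(\mathbf{R}_1)-\Phi(\mathbf{R}_2)| \le K\|\mathbf{R}_1-\mathbf{R}_2\|_F$, which is the desired Lipschitz estimate on $\mathcal{S}_d$. Because all norms on the finite-dimensional space $\mathbb{R}^{d\times d}$ are equivalent, this conclusion is independent of the particular matrix norm chosen to metrize $\mathcal{S}_d$ in the blanket assumption.

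I would close by remarking that exactly the same compactness argument applies one derivative higher: since $\partial\Phi/\partial\mathbf{R}$ is itself a polynomial, its derivative is bounded on $\mathcal{B}$, so $\partial\Phi/\partial\mathbf{R}$ is Lipschitz continuous as well. This is precisely the condition required by the blanket assumption of Section \ref{Privacy_prel}, and it is what legitimizes applying the Stiefel-manifold solver to (\ref{ob1}) together with the Stiefel-gradient formula (\ref{M4}).

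As for difficulty, there is no genuine analytic obstacle here; the only points that need care are (i) confirming that none of the coefficient matrices $\mathbf{A}^{+}\mathbf{A}$, $\mathbf{K}_\mathbf{0}$, $\mathbf{M}$, $\mathbf{W}$ carry any hidden dependence on $\mathbf{R}$, so that the trace expressions are honestly polynomials rather than, say, pseudoinverses of an $\mathbf{R}$-dependent matrix, and (ii) handling the non-convexity of $\mathcal{S}_d$ by passing to the enclosing convex ball $\mathcal{B}$ before invoking the mean value inequality, since the straight segment joining two Stiefel points generally leaves the manifold.
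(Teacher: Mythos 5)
Your proof is correct, but it takes a genuinely different route from the paper's. The paper works directly with the gradients: it computes $\frac{\partial f_{\textnormal{half}^*,a}^{\textnormal{PPS}}}{\partial \mathbf{R}}=\frac{1}{d}\left(-2\mathbf{A}^{+}\mathbf{A}\mathbf{K}_\mathbf{0}+\mathbf{A}^{+}\mathbf{A}\mathbf{R}(\mathbf{M}+\mathbf{M}^T)\right)$ and $\frac{\partial}{\partial \mathbf{R}}(g-\epsilon)^2$ explicitly, bounds the difference of each gradient at $\mathbf{R}_1,\mathbf{R}_2$ via the submultiplicative inequality $\|\mathbf{AB}\|\leq\|\mathbf{A}\|\|\mathbf{B}\|$ and a Von Neumann trace-norm inequality, thereby producing explicit \emph{global} Lipschitz constants $k_{\textnormal{half}^*}=\|\mathbf{A}^{+}\mathbf{A}\|\|\mathbf{M}+\mathbf{M}^T\|$ and $k_g=2\sigma_1\|\mathbf{W}\|$, and concludes by noting that a sum of Lipschitz functions is Lipschitz. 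Note also that, although the proposition is worded as Lipschitz continuity of the objective, what the paper's proof actually establishes---and what the blanket assumption of Section \ref{Privacy_prel} requires---is Lipschitz continuity of the \emph{derivative}; your closing remark (one derivative up) covers precisely this, so your write-up addresses both readings of the statement. Your argument is instead the generic one: the objective is a degree-two polynomial in the entries of $\mathbf{R}$, the Stiefel manifold lies in the compact convex Frobenius ball of radius $\sqrt{d}$, and a bounded gradient plus the mean value inequality on that ball yields the Lipschitz estimate. What each approach buys: the paper's computation gives concrete constants (useful, e.g., for step-size selection in the Stiefel solvers it invokes) and holds on all of $\mathbb{R}^{d\times d}$ rather than only on the manifold, since it never uses compactness; your argument is shorter, applies verbatim to any smooth objective over a compact feasible set without computing anything, and is properly careful about the non-convexity of $\mathcal{S}_d$ by passing to the enclosing convex ball before invoking the mean value inequality---a point the paper never has to confront because its constants are global.
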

\begin{proof}
The derivatives of $f_{\textnormal{half}^*,a}^{\textnormal{PPS}}(\mathbf{R})$ and $(g(\mathbf{R})-\varepsilon)^2$ are obtained as below\footnote{Recall that for matrices $\mathbf{A, B, C}$, we have  $\frac{\partial}{\partial \mathbf{A}}\textnormal{Tr}(\mathbf{AB}) = \mathbf{B}^T$ and $\frac{\partial}{\partial \mathbf{A}}\textnormal{Tr}(\mathbf{ABA}^T\mathbf{C}) = \mathbf{CAB}+\mathbf{C}^T\mathbf{AB}^T$}
\begin{align}
\frac{\partial f_{\textnormal{half}^*,a}^{\textnormal{PPS}}(\mathbf{R})}{\partial \mathbf{R}} &=\frac{1}{d}\left( -2\mathbf{A}^{+}\mathbf{A}\mathbf{K}_\mathbf{0}+\mathbf{A}^{+}\mathbf{A}\mathbf{R}(\mathbf{M}+\mathbf{M}^T)\right)\label{M11}\\
\frac{\partial}{\partial \mathbf{R}}  (g-\varepsilon)^2&= -\frac{4}{dk}\left(\frac{2}{dk}\textnormal{Tr}((\mathbf{I}-\mathbf{R})\mathbf{W})-\varepsilon\right)\mathbf{W}\label{M10},
\end{align}
where (\ref{M10}) is due to $\mathbf{W}=\mathbf{W}^T$. 
Lipschitz continuity of (\ref{M11}) is confirmed noting that
\begin{align}\nonumber
    \|\mathbf{A}^{+}\mathbf{A}\mathbf{R}_1(\mathbf{M}+\mathbf{M}^T)-\mathbf{A}^{+}\mathbf{A}\mathbf{R}_2(\mathbf{M}+\mathbf{M}^T)\|&= \|\mathbf{A}^{+}\mathbf{A}(\mathbf{R}_1-\mathbf{R}_2)(\mathbf{M}+\mathbf{M}^T)\|\\
    &\leq k_{\textnormal{half}^*}\|\mathbf{R}_1-\mathbf{R}_2\|\label{M12}
\end{align}
where in (\ref{M12}) we used the inequality $\|\mathbf{AB}\|\leq \|\mathbf{A}\| \|\mathbf{B}\|$ and we have $k_{\textnormal{half}^*}\triangleq\|\mathbf{A}^{+}\mathbf{A}\|\|(\mathbf{M}+\mathbf{M}^T)\|$ which is a constant by design. Note that (\ref{M12}) is valid for any matrix norm. Lipschitz continuity of (\ref{M10}) is also shown noting that  
\begin{align}\nonumber
    \|\textnormal{Tr}((\mathbf{R}_1-\mathbf{R}_2)\mathbf{W}+(\mathbf{R}_1-\mathbf{R}_2)^T\mathbf{W})\mathbf{W}\|&=2\|\textnormal{Tr}((\mathbf{R}_1-\mathbf{R}_2)\mathbf{W})\mathbf{W}\|\nonumber\\
    &\leq k_{g}\|\mathbf{R}_1-\mathbf{R}_2\|,\label{M13}
\end{align}
where $k_g\triangleq 2\sigma_1 \|\mathbf{W}\|$ with $\sigma_1$ being the largest singular value of the matrix $\mathbf{W}$. Additionally, in (\ref{M13}) we used trace-norm inequality\footnote{For two square matrices $\mathbf{A}, \mathbf{B}$, we have $\textnormal{Tr}(\mathbf{A}^T\mathbf{B})\leq \sigma_1\|\mathbf{B}\|$, which can be easily verified via Von Neumann's Trace Inequalities.}. Proof of the proposition follows by noting that the sum of two Lipschitz continuous functions is Lipschitz continuous. 
\end{proof}


\subsubsection{\textbf{Case} \pmb{$1<d< k$}}
In this case, the adversary uses (\ref{M17}) to solve the formed system of equations, i.e., $\hat{\mathbf{X}}_{\textnormal{LS},a}^{\textnormal{PPS}} = (\mathbf{A}_n^T\mathbf{A}_n)^{-1}\mathbf{A}_n^T \mathbf{b}'$. Recall that this is a case of an overdetermined system, where a unique solution is obtained when the adversary has access to all the parameters in (\ref{eqeq1}). Otherwise, an approximate solution is obtained. Following similar steps as in the previous case, and noting that the adversary uses $\mathbf{W}_n = \mathbf{W}_{\textnormal{pas}}\mathbf{R}$ for their attack, the resulting MSE reads as (see Appendix \ref{app:5} for the derivation of (\ref{M15}))
\begin{equation}\label{M15}
    \textnormal{MSE}(\hat{\mathbf{X}}_{\textnormal{LS},a}^{\textnormal{PPS}})= \frac{2}{d}\textnormal{Tr}\left(\mathbf{K}_\mathbf{0}-\mathbf{R}\mathbf{K}_\mathbf{0}\right).
\end{equation}
In (\ref{M15}), we use the identity $\textnormal{Tr}(\mathbf{R}^T\mathbf{K}_0) = \textnormal{Tr}(\mathbf{RK}_0)$. By calculating the Stiefel gradient of (\ref{M15}) and following the steps outlined in section \ref{pr2}, it is straightforward to verify that in the absence of interpretability constraints, the optimal transformation matrix is $\mathbf{R} = -\mathbf{I}$. This implies that the passive party simply changes the sign of their parameters. To also address the interpretability concerns of the active party, we form the following optimization problem: 
\begin{equation}\label{ob2}
\min_{\substack{\mathbf{R}:\mathbf{R}\in\mathcal{S}_d}}\ \ \frac{2}{d}\textnormal{Tr}(\mathbf{R}\mathbf{K}_0)+\lambda \left(g(\mathbf{R})-\varepsilon\right)^2.
\end{equation}
It is easily verified that the problem in (\ref{ob2}) satisfies the blanket assumption. 

\begin{remark}\label{rem4}
Note that when the active party has no interpretability concerns, it would be clear to them that the transformation matrix is $\mathbf{R} = -\mathbf{I}$. In such cases, they could easily recover the passive party's features by simply reversing the sign of the parameters. One potential solution to mitigate this issue is to maintain uncertainty for the active party about whether $\mathbf{R} = \mathbf{I}$ or $\mathbf{R} = -\mathbf{I}$ was used in the parameter mapping. Under this scenario, even if the adversary takes the average of the estimations from both cases, the convexity of the MSE would prevent them from achieving an estimation error lower than the average MSE of each case. Another approach to enhance privacy is to choose a random value for $\varepsilon$ and transform the parameters accordingly without disclosing the value of $\varepsilon$ to the active party.
\end{remark}

\subsubsection{\textbf{Case} \pmb{$d=1$}}
In this case, the passive party parameters are a $k$-dimensional vector, i.e., $\mathbf{w}_\textnormal{pas}$. As a result, the matrix $\mathbf{A}_{n}^T\mathbf{A}_n$ is a scalar. Consider $\mathbf{A}_n=\mathbf{Jw}_n$, where $\mathbf{w}_n$ is the vector the active party is given access to instead of $\mathbf{w}_\textnormal{pas}$. 
To conduct an attack, the adversary forms the equation $\hat{x}_{\textnormal{LS},a}^{\textnormal{PPS}} = (\mathbf{A}_{n}^T\mathbf{A}_n)^{-1}\mathbf{A}_n^T\mathbf{b}'$ resulting in the following MSE
\begin{align}
    \textnormal{MSE}(\hat{x}_{\textnormal{LS},a}^{\textnormal{PPS}}) &= \mathds{E}[x^2]\left(1-\frac{\mathbf{A}_n^T\mathbf{A}}{\mathbf{A}_{n}^T\mathbf{A}_n}\right)^2.\label{M29}
\end{align}
Therefore, the optimization problem of interest here is \footnote{note that $\mathds{E}[x^2]$ is removed in the optimization as it is a fixed positive parameter.}
\begin{align}
\min_{\substack{\mathbf{w}_n\in \mathbb{R}^{k\times 1}}}\ \ &\frac{\mathbf{A}_n^T\mathbf{A}}{\mathbf{A}_{n}^T\mathbf{A}_n}\left(2-\frac{\mathbf{A}_n^T\mathbf{A}}{\mathbf{A}_{n}^T\mathbf{A}_n}\right)\nonumber\\
&\textnormal{s.t.} \     \ g(\mathbf{w}_n)=\frac{1}{k}(\mathbf{w}_n-\mathbf{w}_\textnormal{pas})^T(\mathbf{w}_n-\mathbf{w}_\textnormal{pas})=\varepsilon\label{M32}
\end{align}

\begin{remark}\label{rem100}
For the special case of $d=1, k=2$, it can be easily verified that given the vectors $\mathbf{w}_n = [w_{1n}, w_{2n}]$ and $\mathbf{w}_\textnormal{pas} = [w_{1}, w_{2}]$, we have
\begin{align}\label{M33}
    \textnormal{MSE}(\hat{x}_{\textnormal{LS},a}^{\textnormal{PPS}}) &= \mathds{E}[x^2] \left(\frac{w_{n1} - w_1 + w_2 - w_{2n}}{w_{1n} - w_{2n}}\right)^2.
\end{align}
To maximise (\ref{M33}), given the equality constraint $\frac{1}{2}\left((w_1 - w_{n1})^2 + (w_2 - w_{n2})^2\right) = \varepsilon$, define $w_1 - w_{n1} \triangleq r \cos(\theta)$, $w_2 - w_{n2} \triangleq r \sin(\theta)$, and $d_0 \triangleq (w_1 - w_2)/r$. Rewriting the objective and the constraint in terms of $r, \theta$, we have
\begin{align}
\max_{\substack{r, \theta}}\ \ &\left(\frac{\sin(\theta) - \cos(\theta)}{d_0 + \sin(\theta) - \cos(\theta)}\right)^2.\nonumber\\
&\textnormal{s.t.} \     \ r^2 = 2\varepsilon\label{M19}
\end{align}
Given the equality constraint, i.e., $r = \sqrt{2\varepsilon}$, the optimization occurs on a circle with the only variable being $\theta$. By differentiating the objective in (\ref{M19}) with respect to $\theta$, it is verified that the necessary condition for the maximum value is $\theta \in \{\frac{3\pi}{4}, \frac{7\pi}{4}\}$. As a result, the maximizer of (\ref{M33}) can be either $[\mathbf{w}_\textnormal{pas} - \sqrt{\varepsilon}, \mathbf{w}_\textnormal{pas} + \sqrt{\varepsilon}]$ or $[\mathbf{w}_\textnormal{pas} + \sqrt{\varepsilon}, \mathbf{w}_\textnormal{pas} - \sqrt{\varepsilon}]$. The designer can try both configurations and select the one which maximizes (\ref{M33}) the most. From section \ref{Privacy_Prel_Inte}, observe that interpretability is not compromised if both parameters shift in the same direction. This is also easy to verify from (\ref{M33}), where such a case results in zero MSE. Note that the same approach in this remark can be used to convert the problem in (\ref{M32}) into an optimization over an n-Sphere with a fixed radius (due to the equality constraint), thus eliminating one variable and converting the problem into an unconstrained optimization over the surface of an n-Sphere, which could accelerate the optimization process.
\end{remark}

\begin{remark}
For larger values of $\varepsilon$, there can be possibly an infinite number of solutions for (\ref{M32}) that result in $\mathbf{A}_n^T\mathbf{A}_n = \mathbf{w}_n^T\mathbf{J}^T\mathbf{J}\mathbf{w}_n = 0$. This could occur for vectors with equal components that are within the $\varepsilon$-vicinity of $\mathbf{w}_\textnormal{pas}$. Considering the equality constraint significantly affects the demanding task of discarding such solutions. Therefore, one possible direction to speed up the computation in (\ref{M32}) is to replace the equality constraint with an inequality,
\end{remark}

\begin{remark}
An interesting observation here is that, in the case of $d=1$, the PPS does not depend on the statistics of the data. This undermines the idea of preserving the passive party's privacy as the active party can easily replicate the same steps to solve (\ref{M32}). To overcome this situation, the guidelines provided in Remark \ref{rem4} could be applied. Specifically, the passive party may either remain non-transparent about whether a PPS is in effect, or they could choose a random and secret $0 \leq \varepsilon' \leq \varepsilon$.
\end{remark}

\begin{remark}
In this scenario, any transformation of the parameters similar to cases \textit{i} and \textit{ii} (here we have $\mathbf{R} \in \mathbb{R}$ is a scalar), would result in scaling all the parameters by a constant factor, meaning $\mathbf{w}_n = r \mathbf{w}_\textnormal{pas}$. This ensures that the system of equations established by the adversary remains as an overdetermined system with a unique solution. Consequently, no matter which component of $\mathbf{w}_n$ the adversary selects, the attack equation would take the form $w_{n,i}x = rw_\textnormal{pas,i}x = b_i'$, where $w_{n,i}$, $w_\textnormal{pas, i}$, and $b_i'$ are the $i$-th components of $\mathbf{w}_n$, $\mathbf{w}_\textnormal{pas}$, and $\mathbf{b}'$, respectively. It is evident that, in this case, not only a fixed scaling but any linear transformation of the parameters (which results in an overdetermined system with no solution) cannot keep the parameters secret indefinitely. As discussed in \cite{RaVaGu}, the active party can collect valuable statistical information from received confidence scores (if the exact forms are provided) over time, which could be used to estimate the actual values of the parameters.
\end{remark}

\subsubsection{\textbf{Case} \pmb{$k=2, d>1$}}
In this case, the passive party's parameters are in the form of a matrix with dimensions $(2 \times d)$. Since $d \geq k$, the adversary is dealing with an underdetermined system, and therefore, we assume half$^*$ estimation is employed by the adversary. Noting that the pseudo-inverse of a vector is given by its transpose divided by its squared norm\footnote{The pseudo-inverse of an all-zero vector is an all-zero vector transposed.}, and assuming $\mathbf{A}_n = \mathbf{Jw}_n$, the resulting MSE is given as\footnote{As the steps to obtain (\ref{M20}) are similar to the steps leading to (\ref{eq1}), they have been omitted for brevity.}
\begin{equation}\label{M20}
    \textnormal{MSE}(\hat{\mathbf{X}}_{\textnormal{half}^*,a}^{\textnormal{PPS}})= \frac{1}{d}\textnormal{Tr}(\mathbf{K}_{\frac{1}{2}\mathbf{1}})+\frac{1}{d\|\mathbf{A}_n\|^2}(\mathbf{AK}_\mathbf{0}\mathbf{A}^T-2\mathbf{AK}_\mathbf{0}\mathbf{A}_n^T+\mathbf{A}_n\mathbf{MA}_n^T).
\end{equation}
(\ref{M20}) is independent of any transformed $\mathbf{J}$ used by the adversary. This is because 
\begin{equation}\nonumber
\mathbf{AK}_\mathbf{0}\mathbf{A}^T-2\mathbf{AK}_\mathbf{0}\mathbf{A}_n^T+\mathbf{A}_n\mathbf{MA}_n^T = \textnormal{Tr}(\mathbf{A}^T\mathbf{AK}_\mathbf{0}-2\mathbf{A}_n^T\mathbf{AK}_\mathbf{0}+\mathbf{A}_n^T\mathbf{A}_n\mathbf{M}).
\end{equation}
To find a right parameter set $\mathbf{w}_n$ to be given to the active party, we solve the following optimization problem 
\begin{align}
\min_{\mathbf{w}_n}\ \ &\frac{1}{d\|\mathbf{A}_n\|^2}(2\mathbf{AK}_\mathbf{0}\mathbf{A}_n^T-\mathbf{AK}_\mathbf{0}\mathbf{A}^T-\mathbf{A}_n\mathbf{MA}_n^T).\nonumber\\
&\textnormal{s.t.} \    \ g(\mathbf{w}_n)=\varepsilon\label{M34}
\end{align}
\begin{remark}
The optimization problems in (\ref{M32}) and (\ref{M34}) for $d=1$ and $k=2$, respectively, are more general compared to the transforming matrices proposed in cases i and ii.
However, since the implementation of the pseudo-inverse of a matrix is technically challenging, it restricts the extension of these cases to cases i and ii. 
\end{remark}

\begin{remark}
In (\ref{M20}), two special cases are worth reviewing. First, if $\mathbf{w}_n=\mathbf{w}_\textnormal{pas}$, we have
\begin{align}
\textnormal{MSE}(\hat{\mathbf{X}}_{\textnormal{half}^*,a}^{\textnormal{PPS}})&= \frac{1}{d}\textnormal{Tr}
(\mathbf{K}_{\frac{1}{2}\mathbf{1}}-\frac{\mathbf{A}^T\mathbf{A}}{\|\mathbf{A}_n\|^2}(\mathbf{K}_\mathbf{0}-\mathbf{M}))\nonumber\\
&= \frac{1}{d}\textnormal{Tr}
(\mathbf{K}_{\frac{1}{2}\mathbf{1}}-\mathbf{A}^{+}\mathbf{A}\mathbf{K}_{\frac{1}{2}\mathbf{1}}),
\end{align}
which is the same as the first term in $\ref{M9}$ (MSE obtained from applying half$^*$ using the exact form of $\mathbf{c}'$). Take any $d$-dimensional vector, namely $\mathbf{w}'_1$ with non-zero components, and form $\mathbf{w}'_n = [\mathbf{w}'_1, \mathbf{w}'_1]^T$. Set $\varepsilon = g(\mathbf{w}'_n)$. In this case, while the constraint in (\ref{M34}) is satisfied, the objective in (\ref{M34}), $\mathbf{A}'_n = \mathbf{Jw}'_n = \mathbf{0}^T$, is not defined. Additionally, any point in the vicinity of $\mathbf{w}'_n$ holds zero value in terms of interpretability of the VFL model. Noting that for large enough $\varepsilon$, there is an infinite number of such undefined points, one should carefully discard them when solving (\ref{M34}) numerically or enforce the algorithm to avoid such solution structures, which could be a demanding task in some cases. Alternatively, a relaxed version of the optimization problem in (\ref{M34}) could be considered as follows:
\begin{align}\label{M36}
\min_{\mathbf{w}_n}\ \ &\frac{1}{d}(2\mathbf{AK}_\mathbf{0}\mathbf{A}_n^T-\mathbf{AK}_\mathbf{0}\mathbf{A}^T-\mathbf{A}_n\mathbf{MA}_n^T).\nonumber\\
&\textnormal{s.t.} \ \ \begin{array}{cc}
     \|\mathbf{A}_n\|^2 \geq \varepsilon_1  \\
     g(\mathbf{w}_n)=\varepsilon.
\end{array}
\end{align}
Applying the Lagrange multiplier's method, the necessary condition for a stationary point in (\ref{M36}) is obtained as below
\begin{equation}\label{M37}
    \frac{1}{d}(2\mathbf{J}^T\mathbf{AK}_0-\lambda\mathbf{w}_\textnormal{pas}) = \mathbf{J}^T\mathbf{J}\mathbf{w}_n(\mathbf{M}+\mathbf{M}^T)-2\lambda_1\mathbf{J}^T\mathbf{Jw}_n,
\end{equation}
where $\lambda$ and $\lambda_1$ are Lagrange multipliers of $\varepsilon$ and $\varepsilon_1$, respectively. The solution to (\ref{M37}) can be found \cite{Matrixcookbook} to be
\begin{equation}\label{M38}
    \textnormal{Vec}(\mathbf{w}_n)= \frac{1}{d}((\mathbf{M}+\mathbf{M}^T-2\lambda_1\mathbf{I})\otimes \mathbf{J}^T\mathbf{J})^{-1}\cdot\textnormal{Vec}(2\mathbf{J}^T\mathbf{AK}_0-\lambda\mathbf{w}_\textnormal{pas}).
\end{equation}
\end{remark}

Given that the inverse in (\ref{M38}) exists, a lookup table could be created according to different values of $\lambda, \lambda_1$ (corresponding to different values of $\varepsilon, \varepsilon_1$) to be used according to the privacy interpretability concerns of the parties.

\begin{remark}\label{rem:JT}
(Robustness of PPSs and potential attack):
In the proposed PPSs, the passive party applies a transformation to its parameters that is determined by a secret design variable $R$.
We explicitly assume that this transformation is \emph{not} disclosed to the active party and that no auxiliary information (e.g., partial passive features obtained via collusion or other side channels) is available. Note that the auxiliary matrix $J$ used in constructing the linear reconstruction systems is fixed by definition.
Furthermore, as long as $R$ remains secret, different choices of matrix $T$ yield identical reconstruction performance in terms of MSE, and therefore do not reduce the privacy guaranties enforced by the PPS.
Consequently, although the active party may adapt the architecture of its AM or RAM models, such adaptations do not overcome the privacy protection provided by the PPS.
The reconstruction error is fundamentally governed by the distortion introduced through the secret transformation $R$.
Note that an active party may continuously collect prediction outputs over time and attempt to jointly estimate both model parameters and passive features by leveraging a large number of observations and imposing feasibility constraints on the feature domain.
Intuitively, such an approach would require a substantial number of prediction queries to achieve a meaningful estimation accuracy, with the required number growing with both the model dimension and the number of passive parameters.
A systematic analysis of the sample complexity required for joint parameter and feature estimation under these conditions is beyond the scope of this paper and is left for future work.
\end{remark}

\begin{remark}\label{rem:Adv}
(Adversarial Choice of PPS Parameters): Under the honest-but-curious assumption, the passive party should anticipate that the active party may construct AM or RAM models to assess confidence scores or decide whether to continue collaboration.
Accordingly, the passive party’s role is to select the PPS parameters to balance privacy protection and interpretability. In particular, the distortion budget $\varepsilon$ governing parameter perturbation is application-dependent and cannot be universally prescribed.
Instead, $\varepsilon$ may be calibrated by enforcing stability on downstream decisions or explanations (e.g., ensuring that model outputs or rankings remain within acceptable tolerances on a validation set), while maximizing the reconstruction error of passive features.
This calibration philosophy parallels the deployment of differential privacy mechanisms, where privacy parameters are typically chosen based on policy, regulatory constraints, and empirical utility considerations rather than closed-form optimality.
\end{remark}

\begin{remark}\label{rem:intpri}(Interpretability–privacy trade-off):
As discussed in Section \ref{Privacy_Prel_Inte}, revealing the passive-party parameters allows the active party to interpret how its local features interact with the passive features in the learned model. 
When the PPS introduces distortion, the resulting parameters deviate from their true values, reducing the fidelity of such interpretations. 
Operationally, higher distortion improves privacy by increasing reconstruction MSE, but decreases interpretability by weakening the correspondence between the released parameters and the true influence of passive features.
The PPS therefore enables a tunable trade-off: low distortion preserves interpretability but yields weaker privacy protection, whereas higher distortion increases privacy at the cost of interpretability.
This trade-off can be adjusted according to the requirements of the deployed VFL application.
\end{remark}

\section{Experimental Results}\label{NR}
In this section, we evaluate the performance of adversary's inference attack on passive features according to the different setups proposed in the previous sections. 
The performance of inference attacks and PPSs are evaluated according to the MSE per feature in (\ref{MSE}), which can be estimated empirically by $\frac{1}{Nd}\sum_{i=1}^N\|\mathbf{X}_i-\hat{\mathbf{X}}_i\|^2$ with $N$ denoting the number of samples under study and $d$ denoting the number of passive features. Since agnostic inference attack can be applied on samples both in the training and prediction phase, we set $N=n_t+n_p$ for each dataset, and name this averaging over $N$ as \textit{average over time}. This is to distinguish from another type of averaging, namely, \textit{average over space}, which is explained via an example as follows. Assume that the Bank data, which has 19 features, is considered. Also, consider the case that we are interested in obtaining the MSE when the active and passive parties have 14 and 5 features, respectively. Since these 19 features are not i.i.d., the MSE depends on which 5 (out of 19) features are allocated to the passive party. In order to resolve this issue, we average the MSE over some different possibilities of allocating 5 features to the passive party. More specifically, we average the MSE over a moving window of size 5 features, i.e., MSE is obtained for 19 scenarios where the feature indices of the passive party are $[1:5], [2:6],[3:7],\ldots,\{19\}\cup[1:4]$. Afterwards, these 19 MSE's are summed and divided by 19, which denotes the MSE when $d=5$. 

In the following, in section \ref{NR1}, we first review the experimental results related to the agnostic inference attack studied in this paper. Then, in section \ref{NR2}, we focus on the performance of the proposed PPSs against this type of attack. Finally, in section \ref{subsec:overhead}, we discuss about the computational and system-level overhead introduced by the PPSs.
\begin{figure*}[t]
 \centering 
 \scalebox{0.19} 
{\includegraphics{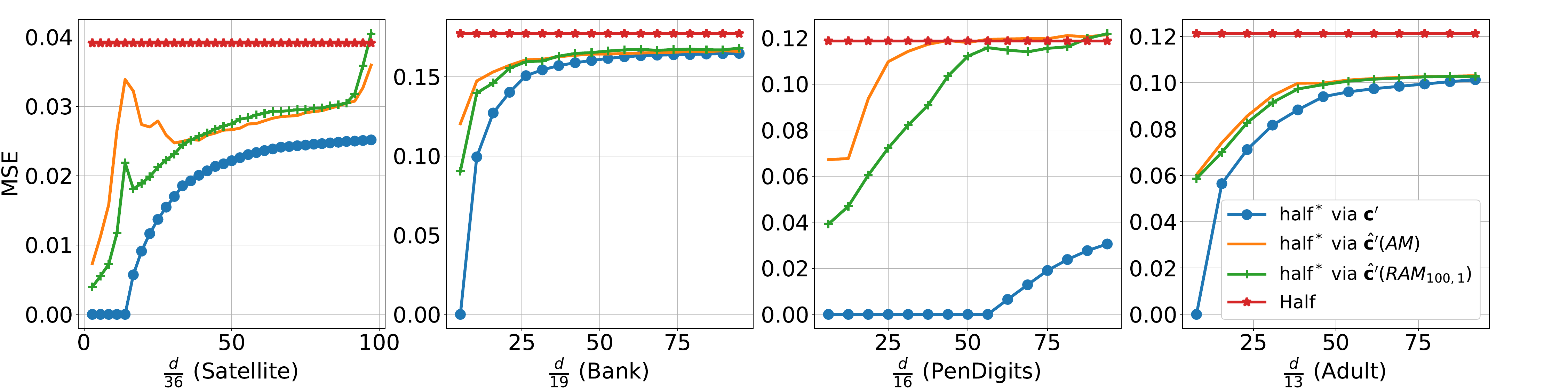}} 
 \caption{MSE per feature obtained from agnostic inference attack via $\hat{\mathbf{c}}'$ obtained from AM, $\hat{\mathbf{c}}'$ obtained from refined AM (RAM), half$^{*}$ via the exact form of $\mathbf{c}'$ obtained from the VFL model and half estimation.}
 \label{fig2} 
\end{figure*}

\begin{figure}[t]
 \centering 
 \scalebox{0.2} 
 {\includegraphics{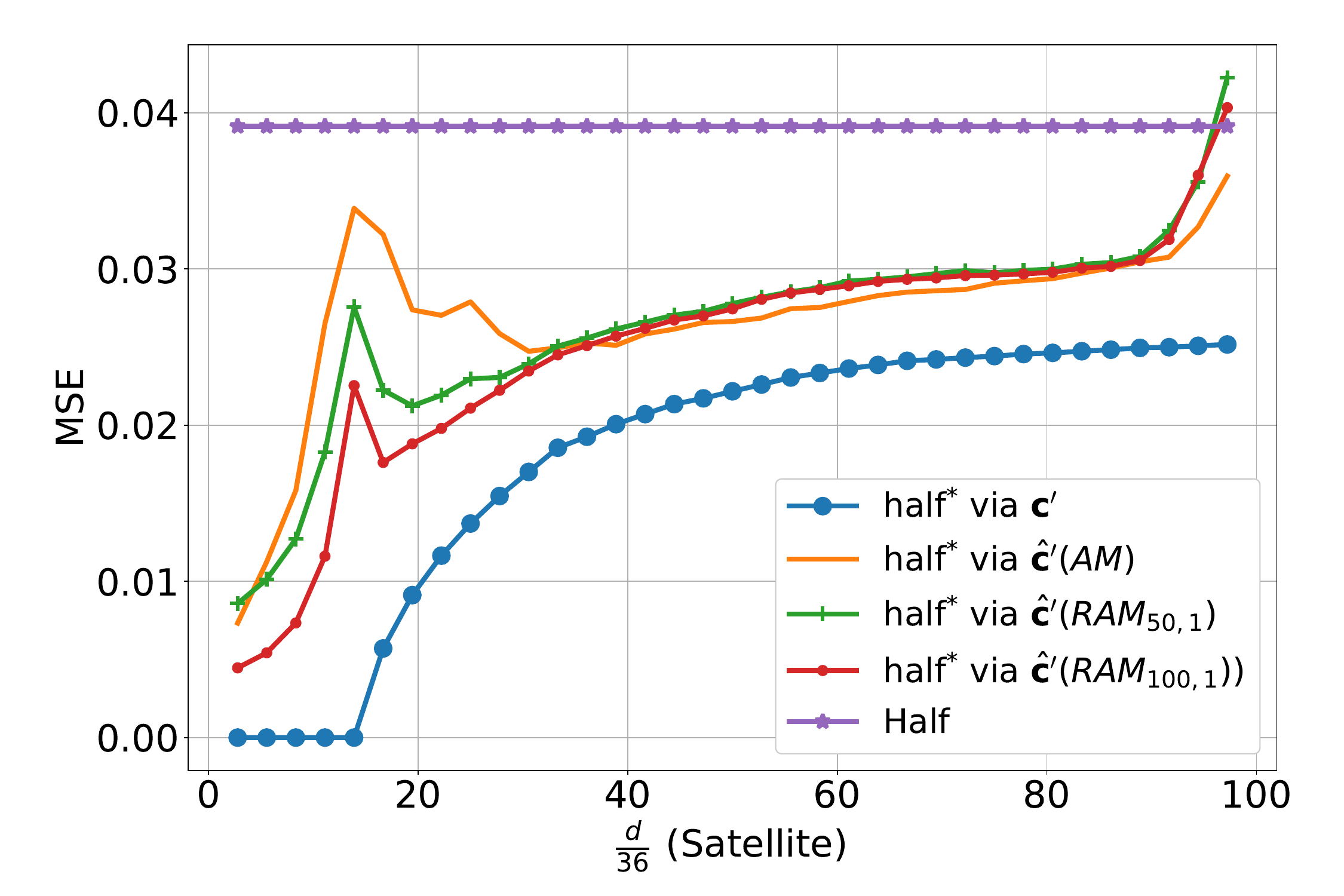}} 
 \caption{MSE per feature resulted from agnostic inference attack via  $\hat{\mathbf{c}}'$ obtained from AM, RAM$_{(n_p, \alpha) = (50,1)}$, RAM$_{(n_p, \alpha) = (100,1)}$, half$^*$ applied $\mathbf{c}'$ and half estimation for the Satellite dataset.}
 \label{fig3} 
\end{figure}

\begin{figure*}[t]
 \centering 
 \scalebox{0.21} 
 {\includegraphics{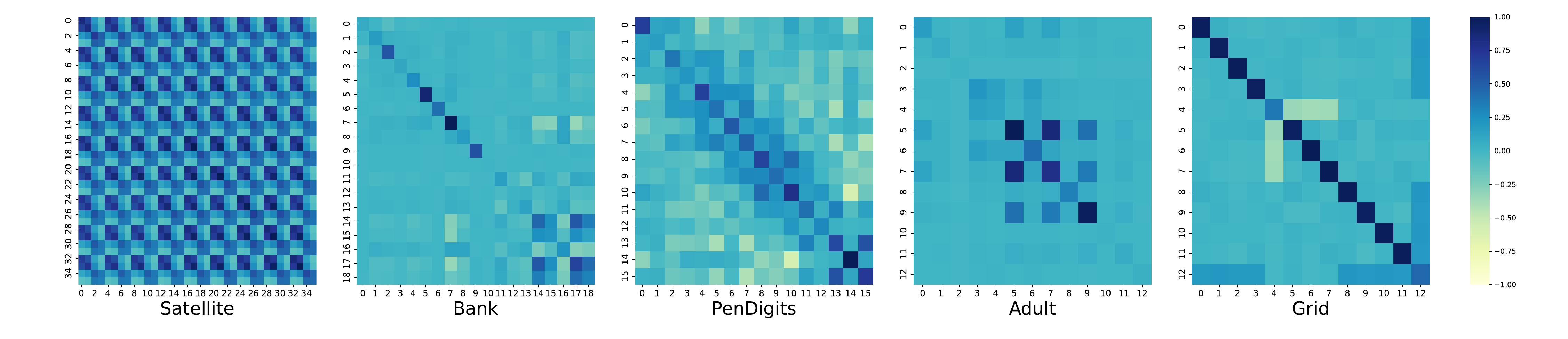}} 
 \caption{Heat map of covariance matrix for the datasets in Table \ref{table_dataset}}
 \label{fig4} 
\end{figure*}

\begin{figure}[ht]
 \centering 
 \scalebox{0.2} {\includegraphics{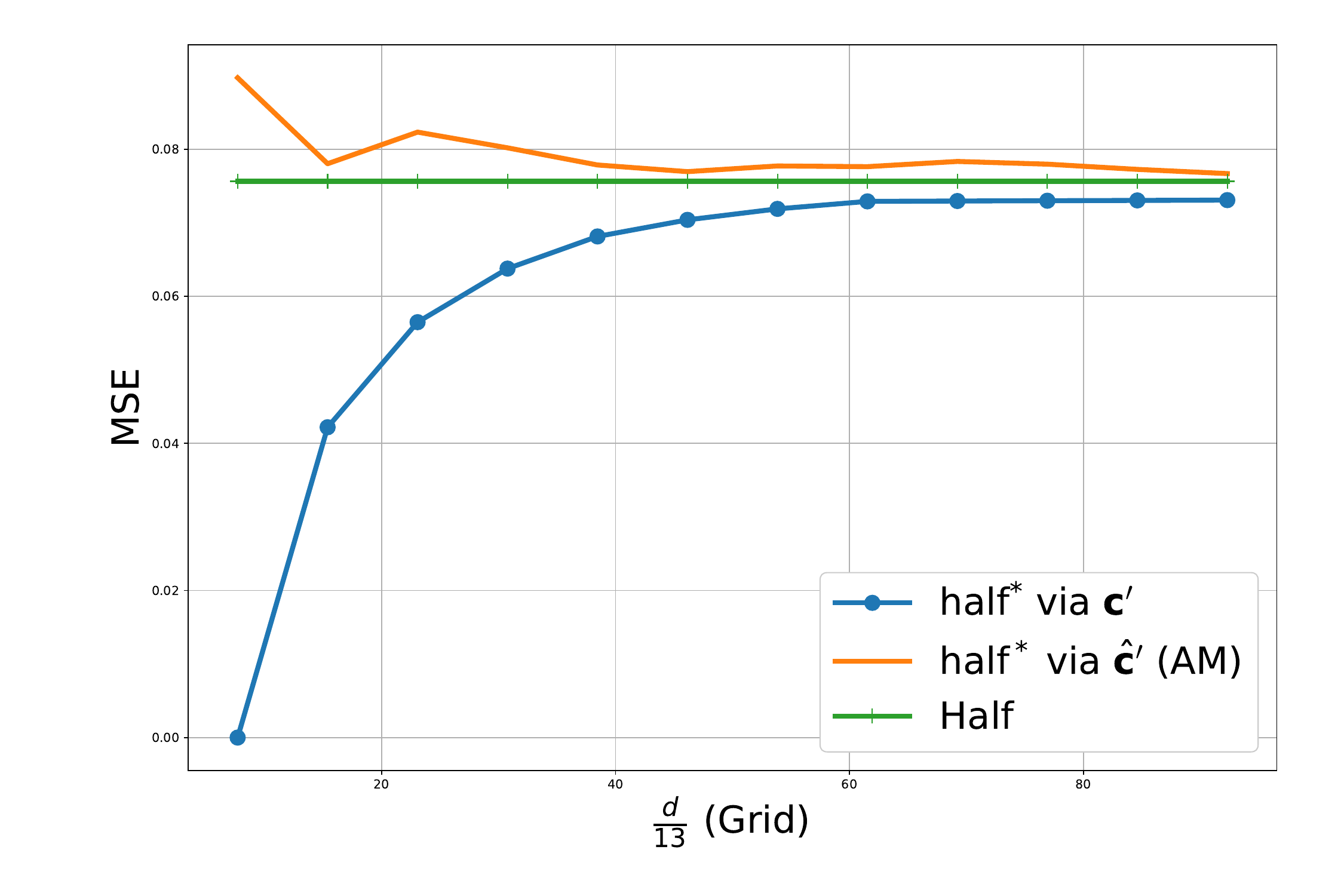}} 
 \caption{MSE per feature resulted from agnostic inference attack via $\hat{\mathbf{c}}'$ obtained from AM, half$^*$ applied on $\mathbf{c}'$ and half estimation.}
 \label{fig5} 
\end{figure}

\begin{figure}[t]
 \centering 
 \scalebox{0.2} 
 {\includegraphics{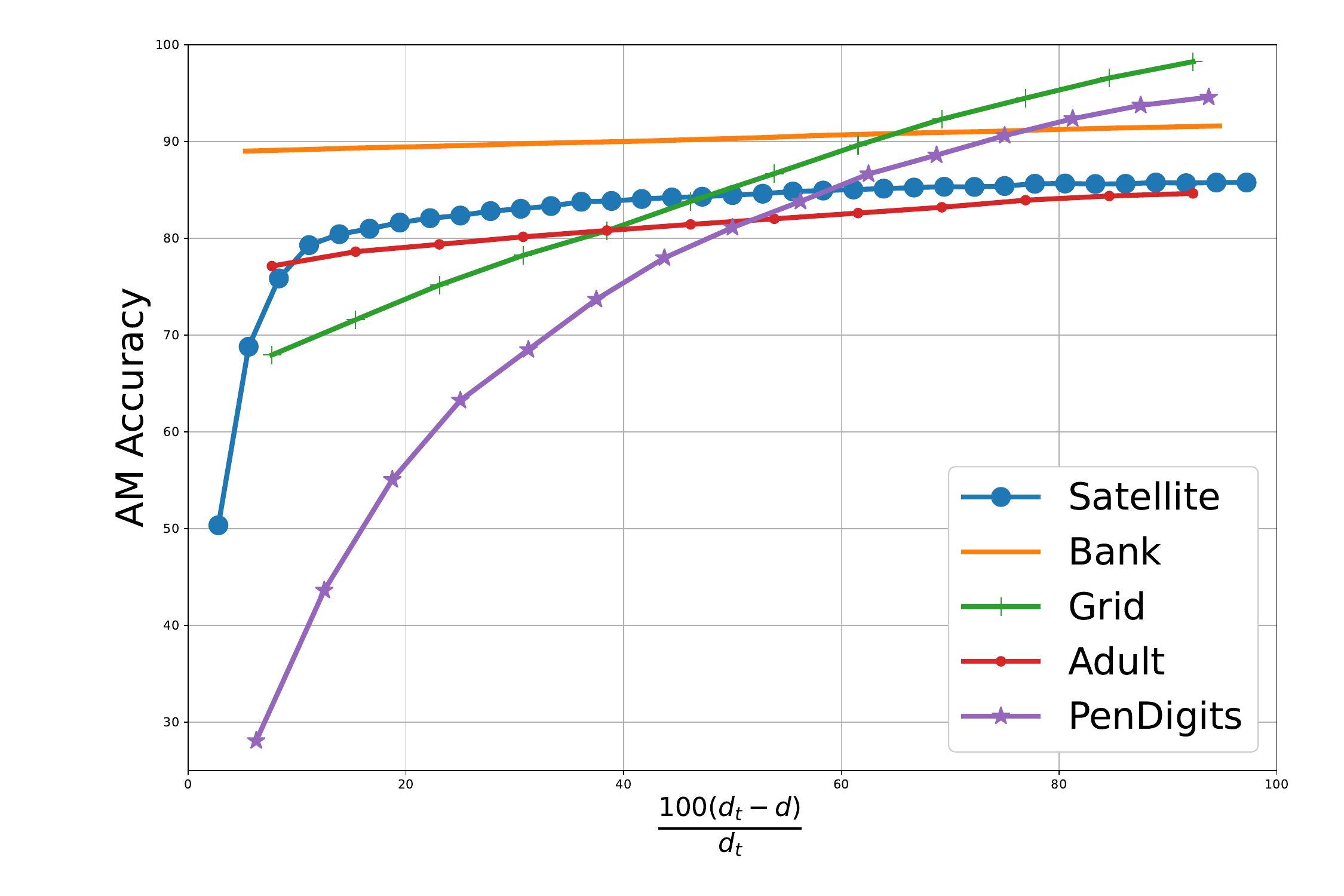}} 
 \caption{Accuracy of AM versus active features for the datasets in Table \ref{table_dataset}}
 \label{fig6} 
\end{figure}

\begin{figure*}[t]
 \centering 
 \scalebox{0.2} 
 {\includegraphics{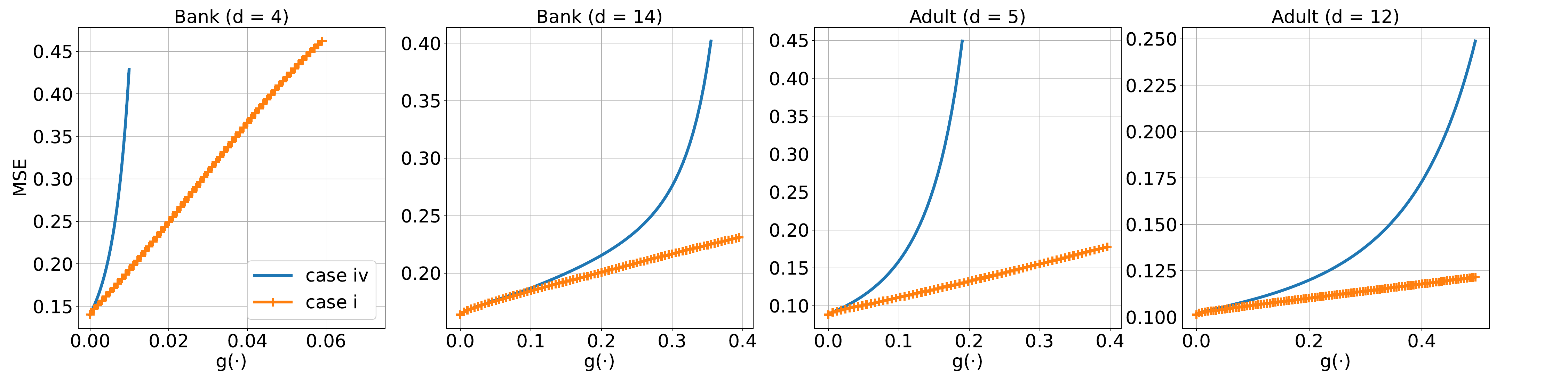}} 
 \caption{Illustration of PI trade-off for case i and case iv for the datasets bank and adult}
 \label{fig8} 
\end{figure*}

\subsection{Experimental results on agnostic inference attack}\label{NR1}
In Figure \ref{fig2}, the resulted MSE from agnostic inference attack via AM (orange solid line denoted by AM) and refined AM (green solid$\pmb{+}$ line denoted by RAM$_{n_p,\alpha}$) are compared with half$^*$ estimation via VFL model (blue solid$\bullet$ line denoted by VFL) and half estimation (red solid$\pmb{\star}$ line) for four different datasets. 
AM is trained based on the loss function defined in (\ref{loss0}) using the training dataset available at the active party, while RAM is trained based on the loss function defined in (\ref{loss1}) with the addition of $n_p =100$ confidence scores in their exact form received from CA.
The RAM in this figure is trained with $\alpha = 1$ as defined in (\ref{loss1}).
As this is an evaluation for agnostic attacks, the MSE is calculated by considering both the training and prediction samples.
It should be noted that since the scores obtained from AM are estimated values of the scores obtained from the VFL model, it is possible that even in the case of $d<k$, the estimated passive features are outside the feasible region (for $d>k$, the system is underdetermined and there is no guarantee that the solution will be in the feasible region).
In such cases, the individual features that are outside the feasible region are mapped to either 0 or 1, whichever is closest.
For the half estimation, all the passive features are estimated as half.
The results show that even without the exact values of the target's confidence scores, it is still possible to achieve a good performance on the inference attack using AM.
For the bank, satellite, and adult datasets, the attack performance using AM is close to the half$^*$ method via the VFL model in which the adversary has access to the exact values of the target's confidence scores.
Moreover, refining AM using (\ref{loss1}) with only 100 scores significantly improves the attack performance, especially when the number of passive features is small. This is mainly because the active party has the advantage of a richer dataset.
Finally, while the attack using AM is not effective for the pendigit dataset, the attack using RAM significantly improves the performance.

\begin{figure*}[!t]
 \centering 
 \scalebox{0.2}{
   \includegraphics{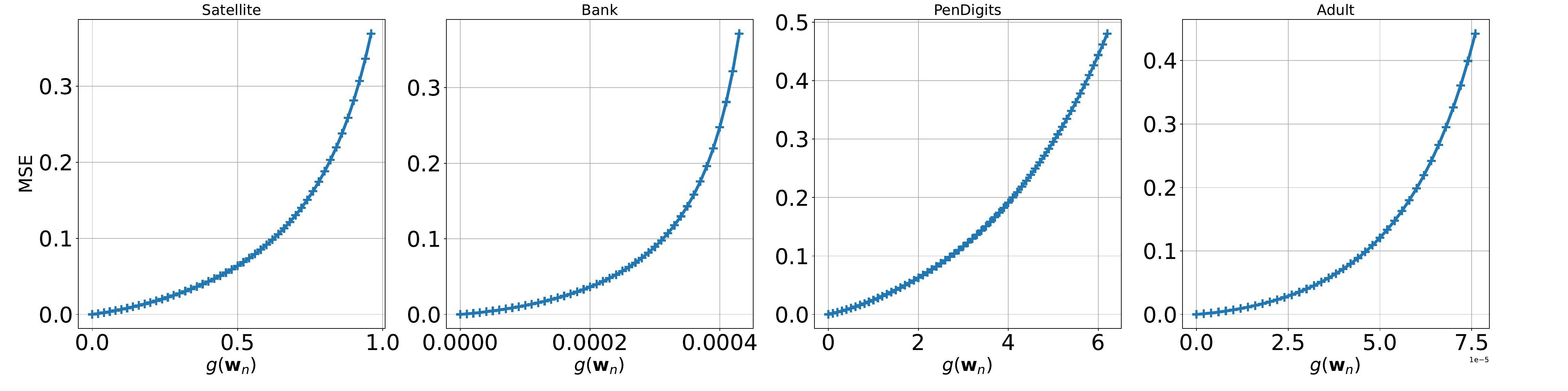} 
 }
 \caption{Illustration of PI trade-off for case iii for the datasets bank, adult, satellite and pendigits.}
 \label{fig7} 
\end{figure*}

\begin{figure*}[!t]
 \centering 
 \scalebox{0.2}{
   \includegraphics{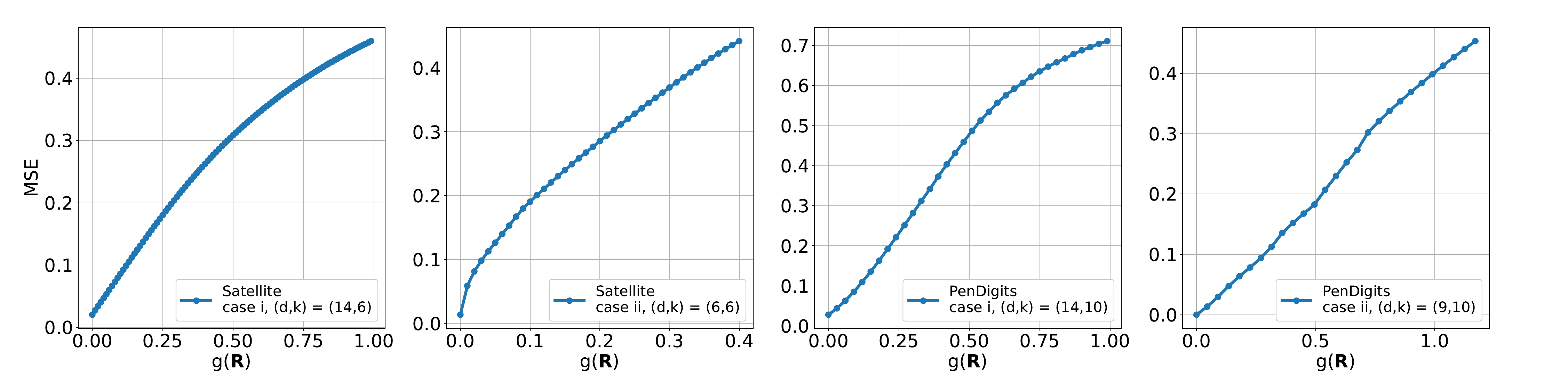} 
 }
 \caption{Illustration of PI trade-off for case i for the datasets satellite and pendigits.}
 \label{fig9} 
\end{figure*}

In Figure \ref{fig3}, performance of agnostic inference attack for satellite dataset when the adversary obtains the estimated confidence scores via AM (orange solid), RAM using 50 prediction confidence scores (green solid$+$) and RAM using 100 prediction confidence scores (red solid$\cdot$) are compared with half estimation and half$^*$ via the VFL model. 
It can be seen that incorporating more confidence scores during the refinement process of AM prior to the attack improves the attack performance. The results show that even using as few as 50 confidence scores can still lead to a significant improvement in the attack performance when the number of passive features is low.


\begin{remark}
The experimental results in this section demonstrate that while the use of confidence scores by the active party can certainly enhance the performance of the inference attack, it is the presence of passive parameters that offers the initial opportunity for exploiting the passive features.
It is important to note that tampering with either the confidence scores or the passive parameters can have adverse effects on either the accuracy of the decisions or the interpretability of the results, respectively. In many scenarios, it may be considered more crucial to maintain accuracy in decision-making than the interpretability of the outcomes. Furthermore, as it will be clearer in section \ref{NR2}, for some cases, a small perturbation of the passive parameters can result in a significant improvement in preserving privacy while still allowing for some level of interpretability in the VFL model and keeping decisions accurate (delivering the confidence scores in the clear).
\end{remark}

\begin{figure*}[t]
 \centering 
 \scalebox{0.2} 
 {\includegraphics{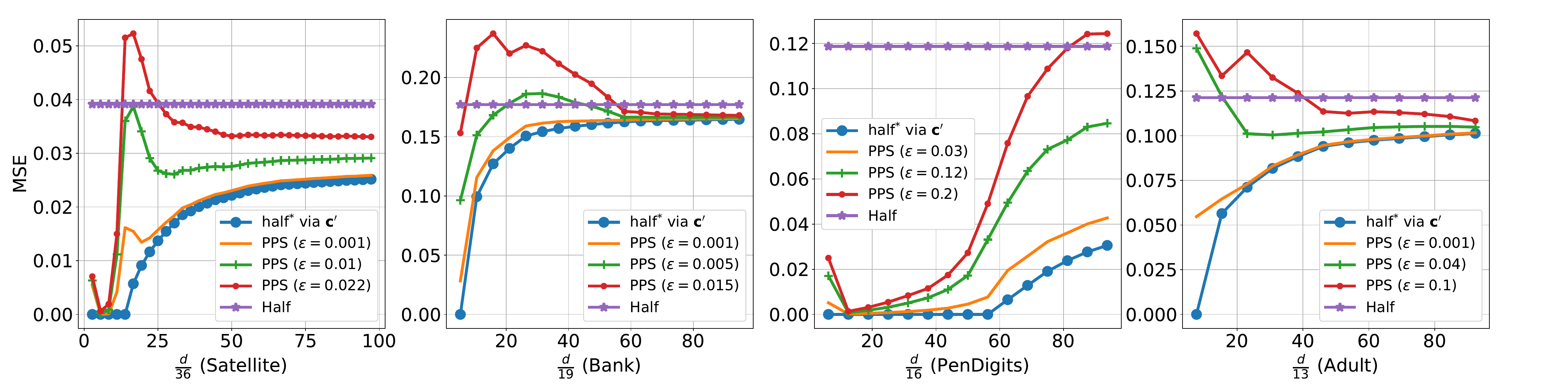}} 
 \caption{MSE resulted from inference attack when PPSs (with different interpretability levels) proposed in section \ref{Privacy_algo} are in place, compared with half$^*$ and half estimation}
 \label{fig11} 
\end{figure*}

\begin{figure}[t]
 \centering 
 \scalebox{0.2} 
 {\includegraphics{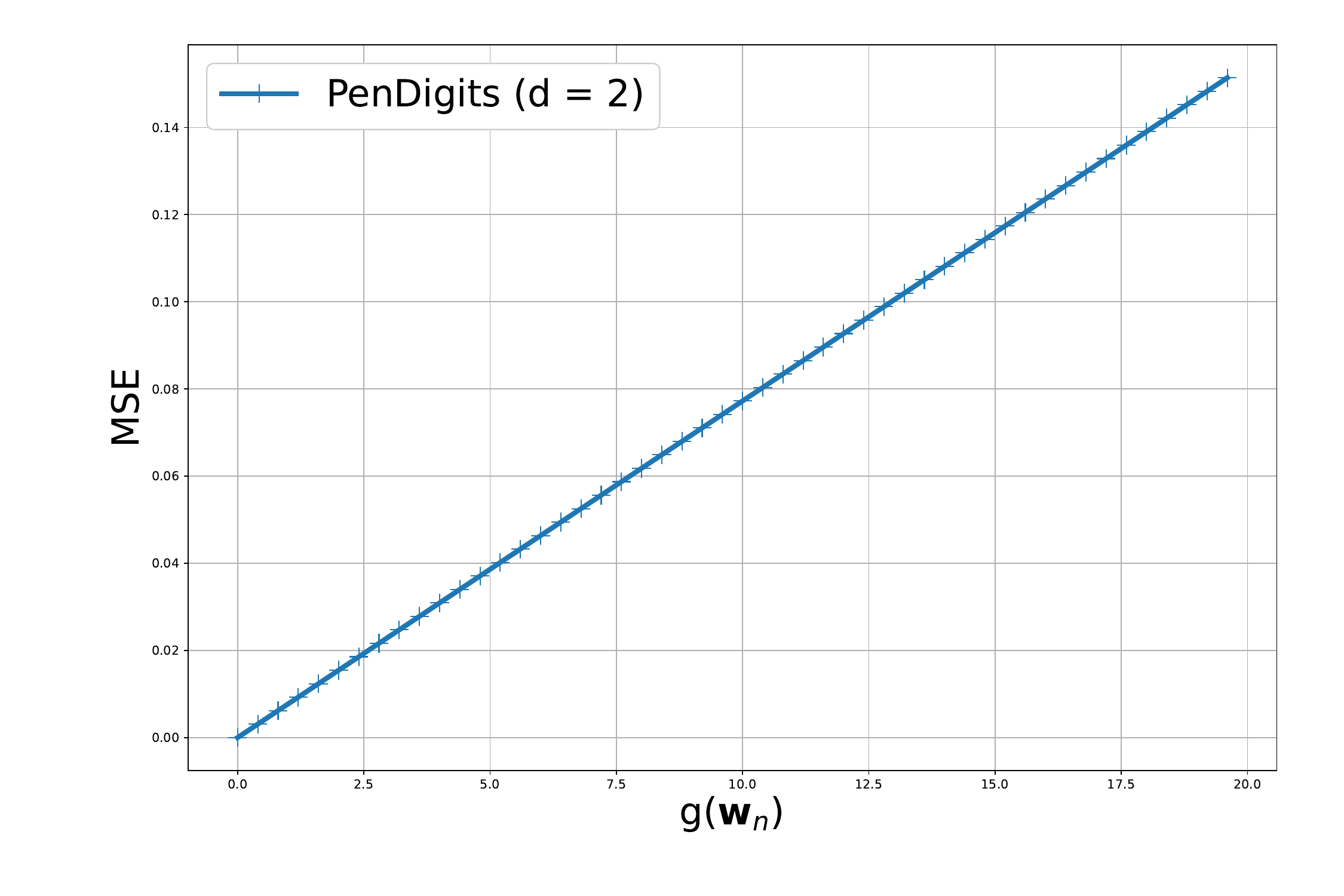}} 
 \caption{MSE per feature of inference attack versus different values of $\varepsilon$ for $d=2$ (Pendigits).}
 \label{fig10} 
\end{figure}

The results shown in Figure \ref{fig2} reveal that the effectiveness of an agnostic inference attack is highly dependent on the nature of the data. For instance, it can be seen that an attack on the bank and adult datasets results in a higher gain compared to the pendigit dataset. This leads to the question of under what conditions the adversary can achieve a lower MSE through this type of inference attack.

We address this question experimentally. We can consider an ideal scenario in which the features of a dataset are mutually independent. In this scenario, training a classifier on one set of features would not reveal any information about the other set of features, thus rendering the inference attack infeasible. However, in real-world situations, there is always a degree of dependency among the features of a dataset.

Due to lack of technical tools to measure the dependency of two or more random variables directly, instead, here we use covariance of the features of a dataset as a an indirect metric. This is represented by the heat map of the covariance matrix $\mathbf{K}_{\pmb{\mu}}$ shown in Figure \ref{fig4}. 
This matrix is calculated as $\mathbf{K}_{\pmb{\mu}}\triangleq \mathds{E}[(\mathbf{X}-\pmb{\mu})(\mathbf{X}-\pmb{\mu})^T]$, where $\mathbf{X}$ is a $d_t$-dimensional vector that represents the features (both passive and active features) and $\pmb{\mu}$ is the mean vector.

It is observed that for the grid dataset, the corresponding covariance matrix is nearly diagonal, indicating that the features are nearly uncorrelated. As shown in Figure \ref{fig5}, the agnostic inference attack on this dataset is ineffective as it is almost similar to estimating the features as half.

\begin{remark}
Figure \ref{fig6} presents the average model accuracy of AM versus the number of active features in the datasets listed in Table \ref{table_dataset}. The results show that the active party benefits more from VFL in the grid and pendigits datasets compared to the bank, satellite and adult datasets. This is because as the number of active features increases, the active party can achieve higher accuracy on the grid and pendigit datasets.
The results from Figure \ref{fig6} in conjunction with the results from Figures \ref{fig2}, \ref{fig4} and \ref{fig5} suggest that datasets that are more suitable for VFL collaboration and result in higher accuracy from VFL collaboration are less susceptible to agnostic inference attacks. This is a crucial aspect to consider for the active party when deciding to participate in a VFL collaboration, especially if the active party incurs costs for both VFL training and prediction. 
\end{remark}

\subsection{Experimental results on privacy-preserving schemes}\label{NR2}
In this section, we evaluate the effectiveness of the PPSs introduced in section \ref{PPS}. Our assessment consists of two stages: first, we analyze each of the cases studied in section \ref{Privacy_algo} separately by keeping either $d$ and/or $k$ constant and varying the value of $\varepsilon$. Next, we examine the opposite scenario by fixing $\varepsilon$ and varying $d$ and $k$.

To measure the interpretability of the passive parameters at the active party, we use the function $g(\cdot)$, where a higher value of $g(\cdot)$ indicates lower interpretability. On the other hand, the privacy of the passive party is measured by the amount of MSE incurred by the active party in the case of an attack, with higher MSE values indicating greater privacy for the passive party.

In Figure \ref{fig8}, the PI trade-offs for the datasets bank and adult are illustrated for three different values of passive features. 
The orange-solid$+$ and blue-solid$\bullet$ lines are the PI trade-offs obtained via the approaches in case i and case iv, respectively. 
It is worth noting two key aspects regarding the results. 
\textit{First}, the inclusion of the results of case i is merely intended to provide a comparison with the more general PPS approach in case iv, and they are not applicable for datasets with $k=2$. 
As seen from the plots, while the results of case i deviate quickly from those of case iv, they remain close to each other in the region of interest, which is approximately 0.18 and 0.14 for the bank and adult datasets, respectively, as indicated in Figure \ref{fig2} (see the MSE obtained from half estimation). Furthermore, as the value of $d$ increases, the results from case i and case iv converge, as shown in the rightmost plots in Figure \ref{fig8}.
\textit{Second}, both the passive and active parties prefer that a small amount of perturbation in the passive parameters results in a noticeable increase in the MSE from inference attacks, while still preserving a reasonable level of interpretability. 
This way, both parties meet their concerns with a minimal passive parameter perturbation. This is equivalent to having a concave PI trade-off curve. However, as observed from Figure \ref{fig8} and the results throughout this section, the PI trade-off achieved by the proposed PPSs for different datasets and under different cases can sometimes be convex, sometimes concave, and in some cases, neither. It remains an open question whether the optimal PI trade-off curve in the region of interest follows a certain shape, preferably concave.

In Figure \ref{fig7}, the PI trade-off is demonstrated for the datasets bank, adult, satellite, and pendigits, when $d=1$. To obtain the PI curve for the bank and adult datasets, we utilized the result from Remark (\ref{rem100}), while for the remaining datasets, the optimization problem in (\ref{M32}) was solved. The approach in this case is considered more comprehensive as it does not impose any limitations on the mapping between the passive parameters and the parameters that the active party has given access to. Note that, as shown in Figure \ref{fig2}, the MSE for the half estimation on the satellite, bank, pendigits, and adult datasets is approximately 0.04, 0.17, 0.12, and 0.12, respectively. This leads to the observation that the PI curve below this MSE level for each of these datasets is almost linear. This observation is promising in the sense that the PPS is capable of delivering a balance between privacy and interpretability concerns of the parties, despite the general non-concave PI trade-off curve.

In Figure \ref{fig9}, the PI trade-off for case i and ii is depicted for the satellite and pendigits datasets. It's important to keep in mind that in this instance, there is no widely accepted general solution that can serve as a benchmark for comparison. As a result, it remains a potential area of research to identify mappings from $\mathbf{W}_\textnormal{pas}$ to $\mathbf{W}_n$ that result in an improvement in the PI trade-off region. 

In Figure \ref{fig11}, the effectiveness of the PPSs outlined in section \ref{Privacy_algo} is compared to the baseline methods half$^*$ and half estimation. Each plot in the figure corresponds to one of the datasets, with three different levels of interpretability depicted in each plot. For instance, in the top-left plot related to the satellite dataset, the interpretability levels are $\varepsilon = 0.001, 0.01, 0.022$. Unlike Figures \ref{fig8}, \ref{fig7}, and \ref{fig9}, in this figure the performance of all four cases for PPS are evaluated against varying numbers of passive features with a fixed interpretability level in each curve. The results indicate that, for a fixed $\varepsilon$, the gap between the MSE and that of half$^*$ is not consistent. This is because as the number of passive parameters increases, finding the optimal solution for the system $\mathbf{A}_n\mathbf{X}=\mathbf{b}'$ becomes more challenging. For the satellite and pendigit datasets, the MSE does not seem to be affected as the value of $\varepsilon$ varies for the case of $d=2$. However, as seen in the results in Figure \ref{fig10} (MSE per feature of inference attack versus different values of $\varepsilon$ for $d=2$), the amount of $\varepsilon$ required to produce a noticeable change in the MSE is not comparable to the other cases.

\begin{remark}\label{rem:DP_Pri}
(On baseline defenses and experimental scope):
Throughout this paper, we adopt a worst-case utility assumption in which the active party receives confidence scores in their exact form.
This choice is motivated by applications that rely on soft scores for downstream tasks and allows us to study privacy risks that cannot be mitigated by score-level obfuscation.
Consequently, defenses based on rounding, clipping, or injecting noise into confidence scores are intentionally excluded from the experimental evaluation. While adding random noise to model parameters may appear as a simple alternative, such perturbations lack a principled design criterion unless calibrated through an explicit objective.
In contrast, the proposed PPSs introduce structured parameter distortions obtained via optimization, enabling a controlled and interpretable privacy--interpretability trade-off.
Black-box configurations, where passive parameters are entirely hidden, trivially eliminate reconstruction attacks but also remove interpretability at the active party; they therefore serve as conceptual reference points rather than experimental baselines.
\end{remark}

\begin{remark}\label{rem:Gaus_baseline}
(On Gaussian noise baselines): A natural baseline is to perturb the passive-party parameters with i.i.d.\ Gaussian noise under the same distortion budget $\varepsilon$. However, in the present setting, such a comparison is not particularly informative. First, the passive parameter block is low-dimensional, and therefore the induced privacy metric (reconstruction MSE) is highly sensitive to individual noise realizations. Unlike our optimization-based PPS, which is deterministic once $\varepsilon$ is fixed, a naive Gaussian mechanism yields unstable privacy--interpretability outcomes unless one averages over many independent perturbation draws. Second, our evaluation already involves averaging over samples and over different passive-feature allocations. A fair Gaussian comparison would therefore require an additional Monte Carlo averaging layer, substantially increasing variance reduction requirements and computational cost. Moreover, the Gaussian variance cannot be directly matched to the same distortion budget $\varepsilon$ in closed form; instead, it must be calibrated empirically through iterative procedures (e.g., bisection), where each candidate variance requires repeated evaluations of the full pipeline. 
Finally, Gaussian perturbation is geometry-agnostic and does not exploit the structure of the reconstruction system induced by the passive parameters. As a result, in this low-dimensional regime, its outcomes are weakly diagnostic: observed differences are often dominated by stochastic variability rather than reflecting meaningful privacy--utility trade-offs. For these reasons, we do not include this baseline and instead focus on the proposed PPS, which directly optimizes distortion under the reconstruction model.
\end{remark}

\subsection{Overhead Analysis and Practical Considerations}
\label{subsec:overhead}

In the following, we discuss the system-level overhead introduced by the proposed PPSs, with particular emphasis on their practical implementation and deployment cost.

\textbf{Computational overhead}:
All PPSs proposed in this paper are applied \emph{offline}, after completion of VFL model training and prior to releasing the passive party’s parameters to the active party.
For PPS designs formulated as optimization problems—including those with orthogonality constraints on Stiefel manifolds—the optimization is carried out locally at the passive party and involves only the learned model parameters and precomputed second-order statistics.

In our implementation, the PPS optimization is implemented in Python using standard numerical libraries (NumPy and SciPy), together with manifold-optimization tools for orthogonality-constrained problems.
In particular, Stiefel-manifold-based designs are solved using existing manifold solvers (e.g., smooth projected or Riemannian gradient-based methods), which operate directly on matrix variables and exploit closed-form expressions for the objective and its gradient.
Because the optimization variables are of dimension proportional to the number of passive features and classes, and because no data-dependent iterations are involved, the resulting wall-clock runtime is on the order of milliseconds for all datasets considered.

\textbf{Communication overhead}:
The proposed PPSs do not introduce additional communication rounds during VFL training or prediction.
The only additional step is the one-time release of the modified passive parameters in place of the original ones.
Therefore, from a system perspective, the communication pattern of the underlying VFL protocol remains unchanged.

\subsection{Guidelines for VFL system designers}\label{sec:guidelines}
At a high level, agnostic attacks are most concerning in deployments where (i) the active party has access to labels, (ii) features across parties exhibit moderate or strong correlation, and (iii) the application requires soft confidence scores for downstream tasks.
In such settings, the active party can derive surrogate models (AM/RAM) that enable reconstructions comparable in accuracy to those obtained under non-agnostic attacks.

The proposed PPSs are intended for scenarios where passive-party interpretability is permitted or required, but where exact parameter disclosure poses privacy risks.
They are applied once after model training and adjust only the passive-party parameters, allowing the VFL workflow and prediction interface to remain unchanged.
For practitioners, the key operational guideline is to select the distortion level that achieves the desired privacy target (in terms of reconstruction MSE) while retaining sufficient interpretability for audit or compliance purposes.

\section{Conclusion}\label{conc}
In this paper, we examine the intricacies of a white-box VFL setting where two parties are involved in constructing a classifier collaboratively. The active party has access to the labels while the passive party contributes by sharing a separate set of features for the same samples. This collaboration takes place under the supervision of a third trusted entity known as the CA.
Previous studies, such as those in \cite{RaVaGu, Xinjian}, have explored various inference attack methods that mainly rely on the active party's exploitation of the confidence scores during the prediction phase.
However, in this study, we investigate a novel approach of inference attack, which does not require the active party to have access to the target's confidence score. The adversary uses their available training set, including labels and active party features, to construct a classifier that is then used to estimate the target's confidence score. Once the estimation is obtained, the active party uses one of the previously studied inference attack methods to carry out the attack. This attack is referred to as ``agnostic'' due to the absence of the target's confidence score at the active party, which as a result puts all the samples used in the training phase at the risk of privacy exposure.
Our findings show that the performance of the adversary's model can be significantly enhanced by incorporating the confidence scores received from CA. This results in more accurate estimates of the target's confidence score, which, in turn, can lead to more successful inference attacks.
Since this type of attack does not rely on the availability of an attack target's confidence score, it is therefore unaffected of any PPSs applied to the confidence scores, such as rounding or adding noise. 

To counter the aforementioned potential for privacy breaches in a VFL setting, various privacy-preserving schemes (PPSs) have been proposed with a focus on protecting the passive parameters.
To that end, we consider the development of PPSs that intentionally distort the passive parameters in a systematic manner.
The active party is given access to these distorted parameters instead of the original versions.
In designing these PPSs, we carefully consider the need for interpretability by the active party while still safeguarding the privacy of the passive party.
This, in turn, gives rise to a trade-off between the interpretability of the VFL model for the active party and the privacy concerns of the passive party.
Our approach moves away from a black or white viewpoint in VFL and instead aims to find a suitable balance between the needs of both parties to mitigate the risk of collaboration.
Our goal is not to completely obscure the active party or completely protect the passive party, but rather to strike a balance between the two in order to minimize potential privacy risks.

\appendix
\section{Derivation of (\ref{M18})}\label{app:1}
\begin{align}
    \textnormal{MSE}(\hat{\mathbf{X}}_{\textnormal{LS},a})&=\frac{1}{d}\mathds{E}\left[\|\mathbf{X}-\hat{\mathbf{X}}_{\textnormal{LS},a}\|^2\right]\nonumber\\
    &=\frac{1}{d} \mathds{E}\left[\|\mathbf{X}-(\mathbf{A}^T\mathbf{A})^{-1}\mathbf{A}^T(\mathbf{b}'+\hat{\mathbf{c}}'-\mathbf{c}')\|^2\right]\nonumber\\
    &= \frac{1}{d}\mathds{E}\left[\|\mathbf{X}-(\mathbf{A}^T\mathbf{A})^{-1}\mathbf{A}^T(\mathbf{AX}+\hat{\mathbf{c}}'-\mathbf{c}')\|^2\right]\nonumber\\
    &=\frac{1}{d}\mathds{E}\left[\|(\mathbf{A}^T\mathbf{A})^{-1}\mathbf{A}^T(\hat{\mathbf{c}}'-\mathbf{c}')\|^2\right]\nonumber\\
    &=\frac{1}{d} \mathds{E}\left[\textnormal{Tr}(\mathbf{A}(\mathbf{A}^T\mathbf{A})^{-1}(\mathbf{A}^T\mathbf{A})^{-1}\mathbf{A}^T(\hat{\mathbf{c}}'-\mathbf{c}')(\hat{\mathbf{c}}'-\mathbf{c}')^T\right]\nonumber\\
    &=\frac{1}{d}\textnormal{Tr}\left(\mathbf{A}(\mathbf{A}^T\mathbf{A})^{-2}\mathbf{A}^T\mathbf{K}_{\mathbf{c}',\mathbf{\hat{c}}'}\right),
\end{align}

\section{Derivation of (\ref{M2})}\label{app:2}
\begin{align}
\textnormal{MSE}(\hat{\mathbf{X}}_{\textnormal{half}^*, a})=& \frac{1}{d}\mathds{E}\left[\|\mathbf{X}-\hat{\mathbf{X}}_{\textnormal{half}^*, a}\|^2\right]\nonumber\\
=& \frac{1}{d} \mathds{E}\left[\|(\mathbf{X}-\hat{\mathbf{X}}_{\textnormal{half}^*}-\mathbf{A^+(\mathbf{\hat{c}}'-\mathbf{c}')})\|^2\right]\nonumber\\
=&\frac{1}{d}\mathds{E}\left[\|\mathbf{X}-\hat{\mathbf{X}}_{\textnormal{half}^*}\|^2\right] + \frac{1}{d}\mathds{E}\left[\left\|\mathbf{A}^+(\mathbf{\hat{c}}'-\mathbf{c}')\right\|^2\right]-\frac{2}{d}\mathds{E}\left[(\mathbf{\hat{c}}'-\mathbf{c}')^T{\mathbf{A}^{+}}^{T}(\mathbf{I}-\mathbf{A}^+\mathbf{A})(\mathbf{X}-\frac{1}{2}\mathbf{1}_d)\right]\label{M21}\\
=&\frac{1}{d}\textnormal{Tr}\left(\left(\mathbf{I}-\mathbf{A}^+\mathbf{A}\right)\mathbf{K}_{\frac{1}{2}\mathbf{1}}\right)+\frac{1}{d}\mathds{E}\left[\left\|\mathbf{A}^+(\mathbf{\hat{c}}'-\mathbf{c}')\right\|^2\right]\label{M9} \\
=&\frac{1}{d}\textnormal{Tr}\left(\left(\mathbf{I}-\mathbf{A}^+\mathbf{A}\right)\mathbf{K}_{\frac{1}{2}\mathbf{1}}\right)+ \frac{1}{d}\textnormal{Tr}\left({\mathbf{A}^+}^T\mathbf{A}^{+}\mathbf{K}_{\mathbf{c}',\mathbf{\hat{c}}'}\right),
\end{align}
where the third term in (\ref{M21}) is $0$ due to ${\mathbf{A}^{+}}^{T}(\mathbf{I}-\mathbf{A}^+\mathbf{A})=\mathbf{0}$. The first term in (\ref{M9}) is due to \cite[Theorem 2]{RaVaGu}. It is important to note that the second term in (\ref{M21}) is always positive, resulting in a positive value for the second term in (\ref{M2}).  

\section{Derivation of (\ref{M26})}\label{app:3}
\begin{align}
   \textnormal{MSE}(\hat{\mathbf{X}}_\textnormal{LS}^\textnormal{PPS})&=\mathds{E}[\|\mathbf{X}-\hat{\mathbf{X}}_\textnormal{LS}^\textnormal{PPS}\|^2] \nonumber\\
   &=\mathds{E}[\|\mathbf{X}-\mathbf{A}_n^{+}\mathbf{b}'\|^2]\nonumber\\
   &=\mathds{E}[\|(\mathbf{I}-\mathbf{A}_n^{+}\mathbf{A})\mathbf{X}\|^2]\label{M22}\\
   &=\mathds{E}[\|(\mathbf{I}-\mathbf{R}^T\mathbf{A}^{+}\mathbf{A})\mathbf{X}\|^2]\label{M23}\\
   &=\textnormal{Tr}((\mathbf{I}-\mathbf{R}^T\mathbf{A}^{+}\mathbf{A})\mathbf{K}_\mathbf{0}(\mathbf{I}-\mathbf{A}^{+}\mathbf{A}\mathbf{R}))\label{M24}\\
   &=\textnormal{Tr}((\mathbf{I}+\mathbf{A}^{+}\mathbf{A}-\mathbf{A}^{+}\mathbf{A}\mathbf{R}-\mathbf{R}^T\mathbf{A}^{+}\mathbf{A})\mathbf{K}_\mathbf{0})\label{M25}\\
   &=\textnormal{Tr}(\mathbf{I}+\mathbf{A}^{+}\mathbf{A})-2\textnormal{Tr}(\mathbf{R}\mathbf{K}_\mathbf{0}\mathbf{A}^{+}\mathbf{A}),\label{M26_1}
\end{align}
where in (\ref{M22}), (\ref{M23}), (\ref{M24}), (\ref{M25}), (\ref{M26_1}) we have 
$\mathbf{b}' = \mathbf{AX}$, 
$(\mathbf{AR})^{+} = \mathbf{R}^{-1}\mathbf{A}^{+}=\mathbf{R}^{T}\mathbf{A}^{+}$, because of orthonormality of $\mathbf{R}$, 
$(\mathbf{A}^{+}\mathbf{A})^2 = \mathbf{A}^{+}\mathbf{A}$, 
$\mathbf{K}_\mathbf{\pmb{\mu}} \triangleq \mathds{E}[(\mathbf{X}-\pmb{\mu})(\mathbf{X}-\pmb{\mu})^T]$ with $\pmb{\mu} = \mathbf{0}$, 
$\textnormal{Tr}(\mathbf{R}^T\mathbf{A}^{+}\mathbf{AK}_\mathbf{0}) = \textnormal{Tr}(\mathbf{A}^{+}\mathbf{ARK}_\mathbf{0})$, respectively.

\section{Derivation of (\ref{eq1})}\label{app:4}
\begin{align}\nonumber
    \textnormal{MSE}(\hat{\mathbf{X}}_{\textnormal{half}^*,a}^{\textnormal{PPS}})  &= \frac{1}{d}\mathds{E}\left[\|\mathbf{X} -\hat{\mathbf{X}}_{\textnormal{half}^*,a}^{\textnormal{PPS}}\|^2\right]\\\nonumber
    &= \frac{1}{d}\mathds{E}\left[\|\mathbf{X} - \mathbf{R}^T\mathbf{A}^{+}\mathbf{AX} - \frac{1}{2}(\mathbf{I}-\mathbf{R}^T\mathbf{A}^{+}\mathbf{AR})\mathbf{1}_d\|^2\right]\\\nonumber
    &=\frac{1}{d}\mathds{E}\left[\|\mathbf{X}-\frac{1}{2}\mathbf{1}_d-\mathbf{R}^T\mathbf{A}^{+}\mathbf{A}(\mathbf{X}-\frac{1}{2}\mathbf{R}\mathbf{1}_d)\|^2\right]\\\nonumber
    &=\frac{1}{d}\mathds{E}\left[\|\mathbf{X}-\frac{1}{2}\mathbf{1}_d\|^2+\textnormal{Tr}(\mathbf{A}^{+}\mathbf{A}(\mathbf{X}-\frac{1}{2}\mathbf{R}\mathbf{1}_d)(\mathbf{X}-\frac{1}{2}\mathbf{R}\mathbf{1}_d)^T)-2\textnormal{Tr}(\mathbf{R}^T\mathbf{A}^{+}\mathbf{A}(\mathbf{X}-\frac{1}{2}\mathbf{R}\mathbf{1}_d)(\mathbf{X}-\frac{1}{2}\mathbf{1}_d)^T) \right]\label{eq0}\\
    &= \frac{1}{d}\textnormal{Tr}(\mathbf{K}_{\frac{1}{2}\mathbf{1}})+\frac{1}{d}\mathds{E}\left[\textnormal{Tr}(\mathbf{A}^{+}\mathbf{A}(\mathbf{X}-\frac{1}{2}\mathbf{R}\mathbf{1}_d)\cdot(\mathbf{X}+\frac{1}{2}\mathbf{R}\mathbf{1}_d-2\mathbf{RX})^T)\right]\\
    &= \frac{1}{d}\textnormal{Tr}(\mathbf{K}_{\frac{1}{2}\mathbf{1}})+\frac{1}{d}\mathds{E}\left[\textnormal{Tr}(\mathbf{A}^{+}\mathbf{A}(\mathbf{XX}^T-2\mathbf{XX}^T\mathbf{R}^T))\right]-\frac{1}{d}\mathds{E}\left[\textnormal{Tr}(\mathbf{A}^{+}\mathbf{A}(\frac{1}{4}\mathbf{R}\mathbf{1}_d\mathbf{1}_d^T\mathbf{R}^T-\mathbf{R1}_d\mathbf{X}^T\mathbf{R}) ) \right]\label{eq2}\\\nonumber
    &= \frac{1}{d}\textnormal{Tr}\left(\mathbf{K}_{\frac{1}{2}\mathbf{1}}+\mathbf{A}^{+}\mathbf{A}\left(\mathbf{K}_\mathbf{0}-2\mathbf{K}_\mathbf{0}\mathbf{R}^T\frac{1}{4}\mathbf{R}\mathbf{1}_d\mathbf{1}_d^T\mathbf{R}^T+\mathbf{R1}_d\pmb{\mu}^T\mathbf{R}\right)\right )\nonumber\\
    &= \frac{1}{d}\textnormal{Tr}(\mathbf{K}_{\frac{1}{2}\mathbf{1}})-\frac{1}{d}\textnormal{Tr}\left(\mathbf{A}^{+}\mathbf{A}\left(2\mathbf{K}_\mathbf{0}\mathbf{R}^T-\mathbf{K}_\mathbf{0}-\mathbf{RMR}^T\right) \right)\label{eq1_1},
\end{align}
where in (\ref{eq0}) we use $\mathbf{R}^T\mathbf{R} = \mathbf{I}$, and $(\mathbf{A}^{+}\mathbf{A})^T\mathbf{A}^{+}\mathbf{A} =(\mathbf{A}^{+}\mathbf{A})^2=\mathbf{A}^{+}\mathbf{A}$. In (\ref{eq2}). we use $\textnormal{Tr}(\mathbf{A}^{+}\mathbf{AX1}_d^T\mathbf{R}^T) = \textnormal{Tr} (\mathbf{R1}_d\mathbf{X}^T\mathbf{A}^{+}\mathbf{A}) = \textnormal{Tr}(\mathbf{A}^{+}\mathbf{AR1}_d\mathbf{X}^T)$
and in (\ref{eq1_1}), we have $\mathbf{M}\triangleq\mathbf{1}_d(\pmb{\mu}-\frac{1}{4}\mathbf{1}_d)^T$.

\section{Derivation of (\ref{M15})}\label{app:5}
\begin{align}
    \textnormal{MSE}(\hat{\mathbf{X}}_{\textnormal{LS},a}^{\textnormal{PPS}})&= \frac{1}{d}\mathds{E}\left[\left\|\mathbf{X}-(\mathbf{A}_n^T\mathbf{A}_n)^{-1}\mathbf{A}_n^T\mathbf{b}'\right\|^2\right]\nonumber \\
    &= \frac{1}{d}\mathds{E}\left[\left\|\mathbf{X}-(\mathbf{A}_n^T\mathbf{A}_n)^{-1}\mathbf{A}_n^T\mathbf{AX}\right\|^2\right]\nonumber\\
    &= \frac{1}{d}\mathds{E}\left[\left\|\mathbf{X}-(\mathbf{R}^T\mathbf{A}^T\mathbf{A}\mathbf{R})^{-1}\mathbf{R}^T\mathbf{A}^T\mathbf{AX}\right\|^2\right]\nonumber\\
    &= \frac{1}{d}\mathds{E}\left[\left\|(\mathbf{I}-\mathbf{R}^T)\mathbf{X}\right\|^2\right]\nonumber\\\label{M15_1}
    &= \frac{2}{d}\textnormal{Tr}\left(\mathbf{K}_\mathbf{0}-\mathbf{R}\mathbf{K}_\mathbf{0}\right).
\end{align}
In (\ref{M15_1}), we use the identity $\textnormal{Tr}(\mathbf{R}^T\mathbf{K}_0) = \textnormal{Tr}(\mathbf{RK}_0)$.

\bibliography{REFERENCE}

@article{penrose, title={A generalized inverse for matrices}, volume={51}, DOI={10.1017/S0305004100030401}, number={3}, journal={Mathematical Proceedings of the Cambridge Philosophical Society}, publisher={Cambridge University Press}, author={Penrose, R.}, year={1955}, pages={406–413}}

@INPROCEEDINGS{Wu2018PPGNN,
  author    = {Wu, Y. and Wang, K. and Zhang, Z. and Lin, W. and Chen, H. and Li, C.},
  booktitle = {Proceedings of the International Conference on Extending Database Technology (EDBT)},
  title     = {Privacy-Preserving Group Nearest Neighbor Search},
  year      = {2018},
  pages     = {277--288}
}

@INPROCEEDINGS{Mohassel2017SecureML,
  author    = {Mohassel, Payman and Zhang, Yupeng},
  booktitle = {Proceedings of the IEEE Symposium on Security and Privacy},
  title     = {SecureML: A System for Scalable Privacy-Preserving Machine Learning},
  year      = {2017},
  pages     = {19--38},
  doi       = {10.1109/SP.2017.12}
}

@INPROCEEDINGS{Chowdhury2020Crypt,
  author    = {Chowdhury, A. R. and Wang, C. and He, X. and Machanavajjhala, A. and S. Jha},
  booktitle = {Proceedings of the International Conference on Management of Data (SIGMOD)},
  title     = {Crypt?: Crypto-Assisted Differential Privacy on Untrusted Servers},
  year      = {2020},
  pages     = {603--619},
  doi       = {10.1145/3318464.3389769}
}

@misc{RaVaGu,
  doi = {10.48550/ARXIV.2207.11788},
  
  url = {https://arxiv.org/abs/2207.11788},
  
  author = {Rassouli, Borzoo and Varasteh, Morteza and Gunduz, Deniz},
  
  keywords = {Machine Learning (cs.LG), Cryptography and Security (cs.CR), Information Theory (cs.IT), FOS: Computer and information sciences, FOS: Computer and information sciences},
  
  title = {Privacy Against Inference Attacks in Vertical Federated Learning},
  
  publisher = {arXiv},
  
  year = {2022},
  
  copyright = {arXiv.org perpetual, non-exclusive license}
}

@article{Jiang,
author = {Jiang, Xue and Zhou, Xuebing and Grossklags, Jens},
year = {2022},
month = {04},
pages = {263-281},
title = {Comprehensive Analysis of Privacy Leakage in Vertical Federated Learning During Prediction},
volume = {2022},
journal = {Proceedings on Privacy Enhancing Technologies},
doi = {10.2478/popets-2022-0045}
}

@misc{Matrixcookbook,
  abstract = {Matrix identities, relations and approximations. A desktop reference for quick overview of mathematics of matrices.},
  added-at = {2011-01-17T12:52:58.000+0100},
  author = {Petersen, K. B. and Pedersen, M. S.},
  biburl = {https://www.bibsonomy.org/bibtex/263c840382cc4b1efb8cefe447465b7ac/hkayabilisim},
  file = {:home/hkaya/Projeler/diagnus/Screener/doc/literature/Petersen2008.pdf:PDF},
  interhash = {6368b9b490c0225e22334ea0a0841a33},
  intrahash = {63c840382cc4b1efb8cefe447465b7ac},
  keywords = {matrixderivative inverse Matrixidentity matrixrelations},
  month = oct,
  note = {Version 20081110},
  publisher = {Technical University of Denmark},
  review = {Matrix Cookbook},
  timestamp = {2011-01-17T12:52:58.000+0100},
  title = {The Matrix Cookbook},
  url = {http://www2.imm.dtu.dk/pubdb/p.php?3274},
  year = 2008
}

@article{Aplus,
author = {Greville, T. N. E.},
title = {Note on the Generalized Inverse of a Matrix Product},
journal = {SIAM Review},
volume = {8},
number = {4},
pages = {518-521},
year = {1966},
doi = {10.1137/1008107},

URL = { 
    
        https://doi.org/10.1137/1008107
    
    

},
eprint = { 
    
        https://doi.org/10.1137/1008107
    
    

}

}

@article{slpg2,
author = {Nachuan Xiao and Xin Liu and Ya-xiang Yuan},
title = {A class of smooth exact penalty function methods for optimization problems with orthogonality constraints},
journal = {Optimization Methods and Software},
volume = {37},
number = {4},
pages = {1205-1241},
year  = {2022},
publisher = {Taylor & Francis},
doi = {10.1080/10556788.2020.1852236},

URL = { 
    
        https://doi.org/10.1080/10556788.2020.1852236
    
    

},
eprint = { 
    
        https://doi.org/10.1080/10556788.2020.1852236
    
    

}

}

@article{slpg3,
author = {Edelman, Alan and Arias, Tom\'{a}s A. and Smith, Steven T.},
title = {The Geometry of Algorithms with Orthogonality Constraints},
journal = {SIAM Journal on Matrix Analysis and Applications},
volume = {20},
number = {2},
pages = {303-353},
year = {1998},
doi = {10.1137/S0895479895290954},

URL = { 
    
        https://doi.org/10.1137/S0895479895290954
},
eprint = { 
    
        https://doi.org/10.1137/S0895479895290954
}
,
    abstract = { In this paper we develop new Newton and conjugate gradient algorithms on the Grassmann and Stiefel manifolds. These manifolds represent the constraints that arise in such areas as the symmetric eigenvalue problem, nonlinear eigenvalue problems, electronic structures computations, and signal processing. In addition to the new algorithms, we show how the geometrical framework gives penetrating new insights allowing us to create, understand, and compare algorithms. The theory proposed here provides a taxonomy for numerical linear algebra algorithms that provide a top level mathematical view of previously unrelated algorithms. It is our hope that developers of new algorithms and perturbation theories will benefit from the theory, methods, and examples in this paper. }
}

@misc{slpg1,
  doi = {10.48550/ARXIV.2103.03514},
  
  url = {https://arxiv.org/abs/2103.03514},
  
  author = {Xiao, Nachuan and Liu, Xin and Yuan, Ya-xiang},
  
  keywords = {Optimization and Control (math.OC), FOS: Mathematics, FOS: Mathematics},
  
  title = {A Penalty-free Infeasible Approach for a Class of Nonsmooth Optimization Problems over the Stiefel Manifold},
  
  publisher = {arXiv},
  
  year = {2021},
  
  copyright = {arXiv.org perpetual, non-exclusive license}
}

@article{Xinjian,
  author    = {Xinjian Luo and
               Yuncheng Wu and
               Xiaokui Xiao and
               Beng Chin Ooi},
  title     = {Feature Inference Attack on Model Predictions in Vertical Federated
               Learning},
  journal   = {CoRR},
  volume    = {abs/2010.10152},
  year      = {2020},
  url       = {https://arxiv.org/abs/2010.10152},
  eprinttype = {arXiv},
  eprint    = {2010.10152},
  timestamp = {Mon, 26 Oct 2020 15:39:44 +0100},
  biburl    = {https://dblp.org/rec/journals/corr/abs-2010-10152.bib},
  bibsource = {dblp computer science bibliography, https://dblp.org}}

@misc{MLR,
author = "Dua, Dheeru and Graff, Casey",
year = "2017",
title = "{UCI} Machine Learning Repository",
url = "http://archive.ics.uci.edu/ml",
institution = "University of California, Irvine, School of Information and Computer Sciences" }

@article{MacKinlay_Data_snooping,
    author = {Lo, Andrew W. and MacKinlay, A. Craig},
    title = "{Data-Snooping Biases in Tests of Financial Asset Pricing Models}",
    journal = {The Review of Financial Studies},
    volume = {3},
    number = {3},
    pages = {431-467},
    year = {2015},
    month = {04},
    abstract = "{Tests of financial asset pricing models may yield misleading inferences when properties of the data are used to construct the test statistics. In particular, such tests are often based on returns to portfolios of common stock, where portfolios are constructed by sorting on some empirically motivated characteristic of the securities such as market value of equity. Analytical calculations, Monte Carlo simulations, and two empirical examples show that the effects of this type of data snooping can be substantial.}",
    issn = {0893-9454},
    doi = {10.1093/rfs/3.3.431},
    url = {https://doi.org/10.1093/rfs/3.3.431},
    eprint = {https://academic.oup.com/rfs/article-pdf/3/3/431/24416126/030431.pdf},
}

@misc{mrtzvrst,
  author={github},
  title={GitHub},
  year={2020},
  url={https://github.com/mrtzvrst},
}

@ARTICLE{Wei_Fed_DP,  author={Wei, Kang and Li, Jun and Ding, Ming and Ma, Chuan and Yang, Howard H. and Farokhi, Farhad and Jin, Shi and Quek, Tony Q. S. and Poor, H. Vincent},  journal={IEEE Transactions on Information Forensics and Security},   title={Federated Learning With Differential Privacy: Algorithms and Performance Analysis},   year={2020},  volume={15},  number={},  pages={3454-3469},  doi={10.1109/TIFS.2020.2988575}}

@ARTICLE{Rong,
  author={Yu, Rong and Li, Peichun},
  journal={IEEE Network}, 
  title={Toward Resource-Efficient Federated Learning in Mobile Edge Computing}, 
  year={2021},
  volume={35},
  number={1},
  pages={148-155},
  doi={10.1109/MNET.011.2000295}}

@article{Cheng_2020,
author = {Cheng, Yong and Liu, Yang and Chen, Tianjian and Yang, Qiang},
title = {Federated Learning for Privacy-Preserving AI},
year = {2020},
issue_date = {December 2020},
publisher = {Association for Computing Machinery},
address = {New York, NY, USA},
volume = {63},
number = {12},
issn = {0001-0782},
url = {https://doi.org/10.1145/3387107},
doi = {10.1145/3387107},
abstract = {Engineering and algorithmic framework to ensure data privacy and user confidentiality.},
journal = {Commun. ACM},
month = {nov},
pages = {33–36},
numpages = {4}
}

@InProceedings{HE_IVAN,
author="Damg{\aa}rd, Ivan
and Jurik, Mads",
editor="Kim, Kwangjo",
title="A Generalisation, a Simplification and Some Applications of Paillier's Probabilistic Public-Key System",
booktitle="Public Key Cryptography",
year="2001",
publisher="Springer Berlin Heidelberg",
address="Berlin, Heidelberg",
pages="119--136",
abstract="We propose a generalisation of Paillier's probabilistic public key system, in which the expansion factor is reduced and which allows to adjust the block length of the scheme even after the public key has been fixed, without loosing the homomorphic property.We show that the generalisation is as secure as Paillier's original system. We construct a threshold variant of the generalised scheme as well as zero-knowledge protocols to show that a given ciphertext encrypts one of a set of given plaintexts, and protocols to verify multiplicative relations on plaintexts.",
isbn="978-3-540-44586-9"
}

@INPROCEEDINGS{Andrew_SMC,  author={Yao, Andrew C.},  booktitle={23rd Annual Symposium on Foundations of Computer Science (sfcs 1982)},   title={Protocols for secure computations},   year={1982},  volume={},  number={},  pages={160-164},  doi={10.1109/SFCS.1982.38}}

@InProceedings{McMahan_2017,
  title = 	 {{Communication-Efficient Learning of Deep Networks from Decentralized Data}},
  author = 	 {McMahan, Brendan and Moore, Eider and Ramage, Daniel and Hampson, Seth and Arcas, Blaise Aguera y},
  booktitle = 	 {Proceedings of the 20th International Conference on Artificial Intelligence and Statistics},
  pages = 	 {1273--1282},
  year = 	 {2017},
  editor = 	 {Singh, Aarti and Zhu, Jerry},
  volume = 	 {54},
  series = 	 {Proceedings of Machine Learning Research},
  month = 	 {20--22 Apr},
  publisher =    {PMLR},
  pdf = 	 {http://proceedings.mlr.press/v54/mcmahan17a/mcmahan17a.pdf},
  url = 	 {https://proceedings.mlr.press/v54/mcmahan17a.html},
  abstract = 	 {Modern mobile devices have access to a wealth of data suitable for learning models, which in turn can greatly improve the user experience on the device. For example, language models can improve speech recognition and text entry, and image models can automatically select good photos. However, this rich data is often privacy sensitive, large in quantity, or both, which may preclude logging to the data center and training there using conventional approaches.  We advocate an alternative that leaves the training data distributed on the mobile devices, and learns a shared model by aggregating locally-computed updates. We term this decentralized approach Federated Learning.  We present a practical method for the federated learning of deep networks based on iterative model averaging, and conduct an extensive empirical evaluation, considering five different model architectures and four datasets. These experiments demonstrate the approach is robust to the unbalanced and non-IID data distributions that are a defining characteristic of this setting. Communication costs are the principal constraint, and we show a reduction in required communication rounds by 10-100x as compared to synchronized stochastic gradient descent. }
}

@article{Andrew_keyboard_prediction,
  author    = {Andrew Hard and
               Kanishka Rao and
               Rajiv Mathews and
               Fran{\c{c}}oise Beaufays and
               Sean Augenstein and
               Hubert Eichner and
               Chlo{\'{e}} Kiddon and
               Daniel Ramage},
  title     = {Federated Learning for Mobile Keyboard Prediction},
  journal   = {CoRR},
  volume    = {abs/1811.03604},
  year      = {2018},
  url       = {http://arxiv.org/abs/1811.03604},
  eprinttype = {arXiv},
  eprint    = {1811.03604},
  timestamp = {Fri, 23 Nov 2018 12:43:51 +0100},
  biburl    = {https://dblp.org/rec/journals/corr/abs-1811-03604.bib},
  bibsource = {dblp computer science bibliography, https://dblp.org}
}

@misc{Francoise_Keyboard,
title	= {Federated Learning for Emoji Prediction in a Mobile Keyboard},
author	= {Francoise Beaufays and Kanishka Rao and Rajiv Mathews and Swaroop Ramaswamy},
year	= {2019},
URL	= {https://arxiv.org/abs/1906.04329}
}

@article{Wenqi_health,
  author    = {Wenqi Li and
               Fausto Milletari and
               Daguang Xu and
               Nicola Rieke and
               Jonny Hancox and
               Wentao Zhu and
               Maximilian Baust and
               Yan Cheng and
               S{\'{e}}bastien Ourselin and
               M. Jorge Cardoso and
               Andrew Feng},
  title     = {Privacy-preserving Federated Brain Tumour Segmentation},
  journal   = {CoRR},
  volume    = {abs/1910.00962},
  year      = {2019},
  url       = {http://arxiv.org/abs/1910.00962},
  eprinttype = {arXiv},
  eprint    = {1910.00962},
  timestamp = {Fri, 11 Feb 2022 16:53:27 +0100},
  biburl    = {https://dblp.org/rec/journals/corr/abs-1910-00962.bib},
  bibsource = {dblp computer science bibliography, https://dblp.org}
}

@INPROCEEDINGS{Songtao_health,  author={Lu, Songtao and Zhang, Yawen and Wang, Yunlong},  booktitle={2020 54th Annual Conference on Information Sciences and Systems (CISS)},   title={Decentralized Federated Learning for Electronic Health Records},   year={2020},  volume={},  number={},  pages={1-5},  doi={10.1109/CISS48834.2020.1570617414}}

@article{Wang_2020,
author = {Wang, Yichuan and Tian, Yuying and Yin, Xinyue and Hei, Xinhong},
year = {2020},
month = {12},
pages = {218-228},
title = {A trusted recommendation scheme for privacy protection based on federated learning},
volume = {3},
journal = {CCF Transactions on Networking},
doi = {10.1007/s42045-020-00045-8}
}

@article{Cuff,
  title={Differential Privacy as a Mutual Information Constraint},
  author={Paul W. Cuff and Lanqing Yu},
  journal={Proceedings of the 2016 ACM SIGSAC Conference on Computer and Communications Security},
  year={2016},
  url={https://api.semanticscholar.org/CorpusID:9204999}
}

@ARTICLE{verdu,
  author={Dongning Guo and Shamai, S. and Verdu, S.},
  journal={IEEE Transactions on Information Theory}, 
  title={Mutual information and minimum mean-square error in Gaussian channels}, 
  year={2005},
  volume={51},
  number={4},
  pages={1261-1282},
  doi={10.1109/TIT.2005.844072}}

@article{Kai_2019,
  author    = {Kai Yang and
               Tao Fan and
               Tianjian Chen and
               Yuanming Shi and
               Qiang Yang},
  title     = {A Quasi-Newton Method Based Vertical Federated Learning Framework
               for Logistic Regression},
  journal   = {CoRR},
  volume    = {abs/1912.00513},
  year      = {2019},
  url       = {http://arxiv.org/abs/1912.00513},
  eprinttype = {arXiv},
  eprint    = {1912.00513},
  timestamp = {Thu, 02 Jan 2020 18:08:18 +0100},
  biburl    = {https://dblp.org/rec/journals/corr/abs-1912-00513.bib},
  bibsource = {dblp computer science bibliography, https://dblp.org}
}

@misc{LR1,
      title={Transparency, Auditability and eXplainability of Machine Learning Models in Credit Scoring}, 
      author={Michael Bücker and Gero Szepannek and Alicja Gosiewska and Przemyslaw Biecek},
      year={2020},
      eprint={2009.13384},
      archivePrefix={arXiv},
      primaryClass={stat.ML},
      url={https://arxiv.org/abs/2009.13384}, 
}

@misc{LR2,
      title={The Impact of Feature Selection and Transformation on Machine Learning Methods in Determining the Credit Scoring}, 
      author={Oguz Koc and Omur Ugur and A. Sevtap Kestel},
      year={2023},
      eprint={2303.05427},
      archivePrefix={arXiv},
      primaryClass={q-fin.RM},
      url={https://arxiv.org/abs/2303.05427}, 
}

@article{LR3,
   title={FedScore: A privacy-preserving framework for federated scoring system development},
   volume={146},
   ISSN={1532-0464},
   url={http://dx.doi.org/10.1016/j.jbi.2023.104485},
   DOI={10.1016/j.jbi.2023.104485},
   journal={Journal of Biomedical Informatics},
   publisher={Elsevier BV},
   author={Li, Siqi and Ning, Yilin and Ong, Marcus Eng Hock and Chakraborty, Bibhas and Hong, Chuan and Xie, Feng and Yuan, Han and Liu, Mingxuan and Buckland, Daniel M. and Chen, Yong and Liu, Nan},
   year={2023},
   month=oct, pages={104485} }

@article{LR4,
  author       = {Shipe, Matthew E. and Deppen, Stephen A. and Farjah, Farhood and Grogan, Eric L.},
  title        = {Developing prediction models for clinical use using logistic regression: an overview},
  journal      = {Journal of Thoracic Disease},
  year         = {2019},
  month        = mar,
  volume       = {11},
  number       = {Suppl 4},
  pages        = {S574--S584},
  doi          = {10.21037/jtd.2019.01.25},
  pmid         = {31032076},
  pmcid        = {PMC6465431}
}
\bibliographystyle{IEEEtran}
\end{document}